\newcommand{\para}[1]{{\vspace{4pt} \bf \noindent #1 \hspace{7pt}}}
\newcommand{\fixhan}[1]{{\color{black}#1}}
\newcommand{\fixhanx}[1]{{\color{black}#1}}
\newcommand{\fixqing}[1]{{\color{black}#1}}
\newenvironment{packed_itemize}{
\begin{list}{\labelitemi}{\leftmargin=0.5em}
  \setlength{\itemsep}{3pt}
  \setlength{\parskip}{0pt}
  \setlength{\parsep}{0pt}
  \setlength{\headsep}{0pt}
  \setlength{\topskip}{0pt}
  \setlength{\topmargin}{0pt}
  \setlength{\topsep}{0pt}
  \setlength{\partopsep}{0pt}  
}{\end{list}}
\begin{document}

\title{Graph Watermarks}
 \author{Xiaohan Zhao, Qingyun Liu, Lin Zhou, Haitao Zheng and Ben Y. Zhao\\
 \affaddr{Computer Science, UC Santa Barbara}\\
 {\em \{xiaohanzhao, qingyun\_liu, linzhou,  htzheng, ravenben\}@cs.ucsb.edu}}

\newtheorem{theorem}{Theorem}
\newtheorem{definition}{Definition}
\newtheorem{lemma}{Lemma}
\newtheorem{cor}{Corollary}
\newtheorem{fact}{Fact}
\newtheorem{property}{Property}
\newtheorem{remark}{Remark}
\newtheorem{claim}{Claim}

\maketitle
\sloppy
\begin{abstract}

  From network topologies to online social networks, many of today's most
  sensitive datasets are captured in large graphs.  A significant challenge
  facing owners of these datasets is how to share sensitive graphs with
  collaborators and authorized users, {\em e.g.} network topologies with
  network equipment vendors or Facebook's social graphs with academic
  collaborators.  Current tools can provide limited node or edge privacy, but
  require modifications to the graph that significantly reduce its utility.

  In this work, we propose a new alternative in the form of {\em graph
    watermarks}. Graph watermarks are small graphs tailor-made for a given
  graph dataset, a secure graph key, and a secure user key.  To share a sensitive graph $G$ with a
  collaborator $C$, the owner generates a watermark graph $W$ using $G$,
  the graph key, and
  $C$'s key as input, and embeds $W$ into $G$ to form $G'$.  If $G'$ is
  leaked by $C$, its owner can reliably determine if the watermark $W$
  generated for $C$ does in fact reside inside $G'$, thereby proving $C$ is
  responsible for the leak.  Graph watermarks serve both as a deterrent
  against data leakage and a method of recourse after a leak.  We provide
  robust schemes for creating, embedding and extracting watermarks, and use
  analysis and experiments on large, real graphs to show that they are unique
  and difficult to forge.  We study the robustness of graph watermarks
  against both single and powerful colluding attacker models, then propose
  and empirically evaluate mechanisms to dramatically improve resilience.
\end{abstract}

\section{Introduction}
Many of today's most sensitive datasets are captured in large graphs. Such
datasets can include maps of autonomous systems in the Internet, social networks
representing billions of friendships, or connected records of patent
citations. 
Controlling access to these datasets is a difficult challenge.  More
specifically, it is often the case that owners of large graph datasets would
like to share access to them to a fixed set of entities without the data
leaking into the public domain.  For example,  an ISP may be required to
share detailed network topology graphs with a third party networking
equipment vendor, with a strict agreement that access to these sensitive
graphs must be limited to authorized personnel only.  Similarly, a large
social network like Facebook or LinkedIn may choose to share portions of its
social graph data with trusted academic collaborators, but clearly want to
prevent their leakage into the broader research community.  

One option is to focus on building strong access control mechanisms to
prevent data leakage beyond authorized parties.  Yet in most scenarios,
including both examples above, data owners cannot restrict physical access to
the data, and have limited control once the data is shared with the trusted
collaborator.  It is also the case that no matter how well access control
systems are designed, they are never foolproof, and often fall prey to
attacks on the human element, {\em i.e.} social engineering.  Another option
is to modify portions of the data to reduce the impact of potential data
leakages.  This has the downside of making the data inherently noisy and
inaccurate, and still can be overcome by data reconstruction or
de-anonymization attacks using external input~\cite{deanonymize}.  Finally,
these schemes are hard to justify, in part because it is very difficult to
quantify the level of protection they provide.

In this work, we propose a new alternative in the form of {\em graph
  watermarks}.  Intuitively, watermarks are small, often imperceptible
changes to data that are difficult to remove, and serve to associate some
metadata to a particular dataset.  They are used successfully today to limit
data piracy by music vendors such as Apple and Walmart, who embed a user's
personal information into a music file at the time of
purchase/download~\cite{mp3water}.  Should the purchased music be leaked onto
music sharing networks, it is easy for Apple to track down the user who was
responsible for the leak.  In our context, graph watermarks work in a similar
way, by securely identifying a copy of a graph with its ``authorized user.''
Should a shared graph dataset be leaked and discovered later in
  public domains (on BitTorrent
perhaps), the data owner can extract watermark from the leaked copy and use
it as proof to seek damages against the collaborator responsible for the
leak.  While not a panacea, graph watermarks can provide additional level of
protection for data owners who want to or must share their data, and perhaps
encourage risk-averse data owners to share potentially sensitive graph data,
{\em e.g.}  encourage LinkedIn to share social graphs with academic
collaborators.

To be effective, a graph watermark system needs to provide several key
properties.  {\em First}, graph watermarks should be relatively small
compared to the graph dataset itself.  This has two direct consequences: the
watermark will be difficult to detect (and remove) by potential attackers,
and adding the watermark to the graph has minimal impact on the graph
structure and its utility. {\em Second}, watermarks should be difficult to
forge and should not occur naturally in graphs, ensuring that the presence of
a valid watermark can be securely associated with some user, {\em i.e.}
non-repudiation. {\em Third,} both the embedding and extraction of watermarks
should be efficient, even for extremely large graph datasets with billions of
nodes and edges.  {\em Finally,} our goal is to design a watermark system
that works in any application context involving graphs.  Therefore, we make no
assumptions about the presence of metadata.  Instead, our system must
function for ``barebones'' graphs, {\em i.e.} symmetric, unweighted graphs
with no node labels or edge weights.

In this paper, we present initial results of our efforts towards the design of a
scalable and robust graph watermark system.  Highlights of our work can
be organized into the following key contributions. 
\vspace{-0.05in}
\begin{packed_itemize}
\item  First, we identify the goals and requirements of a graph watermark
  system.  We also describe an initial design of a graph watermark system that
  efficiently embeds \fixhanx{watermarks} into and extracts watermarks out of large graphs.  Graph
  watermarks are uniquely generated based on \fixhanx{a user private key, a
    secure graph key}, and the
  graph they are applied to.  We describe constraints on its
  applicability, and identify examples of graphs where watermarks cannot
  achieve desirable levels of key properties such as uniqueness.
\item Second, we provide a strict proof of uniqueness of graph watermarks,
  showing that it is extremely difficult for attackers to forge
  watermarks. 
\item Third, we evaluate our watermarks in term of distortion,
  \fixhanx{false positive}, and efficiency on a wide variety of large graph datasets.
\item Fourth, we identify two attack models, describe additional features to
  boost robustness, and evaluate them under realistic conditions.
\end{packed_itemize}

To the best of our knowledge, our work is the first practical proposal for
applying watermarks to graph data.  We believe graph watermarks are a useful
tool suitable for a wide range of applications from tracking data leaks to
data authentication. Our work identifies the problem and defines an initial
groundwork, setting the stage for follow-up work to improve robustness
against a range of stronger attacks.

\section{Background and Related Work}

In this section, we provide background and related work on the graph privacy
problem and discuss the use of watermark techniques in applications such as
digital multimedia as well as graphs.

\para{Graph Privacy.} 
Graph privacy is a significant problem that has been magnified by the arrival
of large graphs containing sensitive data, {\em e.g.} Facebook social graphs
or mobile call graphs.  Recent studies~\cite{anonymization-www,deanonymize}
show that deanonymization attacks using external data can defeat
most common anonymization techniques. 

A variety of solutions have been proposed, ranging from anonymization tools
that defend against specific structural attacks, or more attack-agnostic
defenses.  To protect node- or edge-privacy against specific, known attacks,
techniques utilize variants of {\em k-anonymization} to produce structural
redundancy at the granularity of subgraphs, neighborhoods or single
nodes~\cite{liu2008identity,zhou2008neighborhood,hay2007anonymizing, zou2009k}. Alternatively, randomization provides privacy
  protection by randomly adding,
  deleting, or switching edges~\cite{hanhijarvi2009randomization,
    ying2008randomizing}. Others
partition the nodes and then describing the graph at the level of partitions
to avoid structural re-identification~\cite{hay2008resisting}.  Finally,
other solutions have taken a different approach, by producing model-driven
synthetic graphs that replicate key structural properties of the original
graph~\cite{modeling_www}. One extension of this work utilizes
differential privacy techniques to provide a tunable accuracy vs. privacy
tradeoff~\cite{ale2011differentially}.




The goals of our work are quite different from prior work on graph
anonymization, meant to protect data before its public release.  We are
concerned with scenarios where graph data is shared between its owner and
groups of trusted collaborators, {\em e.g.} third party network vendors
analyzing an ISP's network topology, or Facebook sharing a graph with a small
set of academic researchers.  The ideal goal in these scenarios is to ensure
the shared data does not leak into the wild.  Once data is shared with
collaborators, reliable tools that can track leaked data back to its source
serve as an excellent deterrent.  Watermarking techniques have addressed
similar problems in other contexts, and we briefly describe them here.

\para{Background on Digital Watermarks.}
Watermarking is the process of embedding specialized metadata into multimedia
content such as images or audio/video files~\cite{lee2001survey}. This
embedded {\em watermark} is later extracted from the file and used to identify
the source or owner of the content.  These systems include both an embedding
component and an extraction or recovery component. The embedding component
takes three inputs: a watermark, the original data, and a key. The watermark
is embedded into the data in a way that minimizes impact on the data, {\em
  i.e.} transparent letters overlaid on top of an image.  The key is used as
a parameter to change the way the watermark is embedded, usually corresponds
to a specific user, and is kept confidential by the data owner to prevent
unauthorized parties from recovering and modifying the watermark.  Extraction
takes as input the watermarked data, the key, and possibly a copy of the
original data. Extraction can directly produce the embedded watermark or a
confidence measure of whether it is present.

Significant work has been done in digital watermarking, particularly image
watermarking~\cite{walton1995authentication,macq1995cryptology,bender1996datahiding,ruanaidh1996phase,xia1997multiresolution}.
Image watermarking techniques can be classified into two classes based on
their working domains. The first class of watermarks is applied to the
original domain of the image, the spatial domain. Basic techniques include
modifying the least significant bits of each image pixel on the original
image to encode the watermark~\cite{walton1995authentication,macq1995cryptology,bender1996datahiding}. 
The second class applies watermarks to the transformed domain of the image,
{\em i.e.}  the frequency domain.  The original data is first transformed
into frequency domain using DCT~\cite{piva1997dct},
DFT~\cite{ruanaidh1996phase} or DWT~\cite{xia1997multiresolution}, added a
sequence of small noises to several invisible frequencies, and then the
result is transformed back into spatial domain as the watermarked image. The
sequence of noises is the watermark, and can be extracted by carrying out
the reverse process on the watermarked image. 

Watermark techniques are already widely used today to protect
  intellectual property.  Watermark techniques~\cite{vectormap02,
    vectormapspectral03} have been studied to protect the abuse of digital
  vector maps.  Like image watermarks, these techniques can be classified as
  spatial domain methods and transformed domain methods. Unlike image
  watermarks, the spatial domain methods embed watermarks by modifying vertex
  coordinates~\cite{vectormap02}, while the transformed domain methods tend
  to transform vector maps into a different frequency domain, such as the
  mesh-spectral domain~\cite{vectormapspectral03}. Watermarks have also been
used to protect software copyrights, by adding spurious execution paths in
the code that would not be triggered by normal
inputs~\cite{collberg1999software,venkatesan2001software}.  These execution
paths are embedded as extra control flows between blocks of code, and are
triggered (or extracted) by either locating the code, or running the program
with a special input that triggers the alternate execution paths. Moreover,
algorithms have been proposed for watermarking relational
datasets~\cite{agrawal2002databases,li2005fingerprinting,muhammad2012relational}.
Much of this has focused on modifying numeric attributes of relations,
relying on the primary key attribute as indicator of watermark locations,
assuming that the primary key attribute does not change. Finally,
  watermarks, in the form of minute changes, have been applied to
  protect circuit designs in the 
  semiconductor industry~\cite{graphcolorwt98, graphpartitionwt01}.

\begin{figure}[t]
\centering
\subfigure[Embedding]
{\epsfig{file=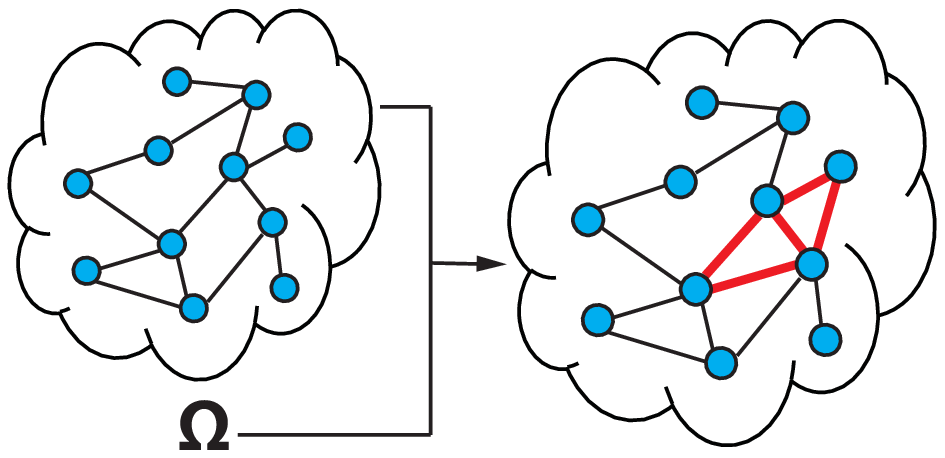,width=2.2in}}
\hspace{0.1in}
\subfigure[Extraction]
{\epsfig{file=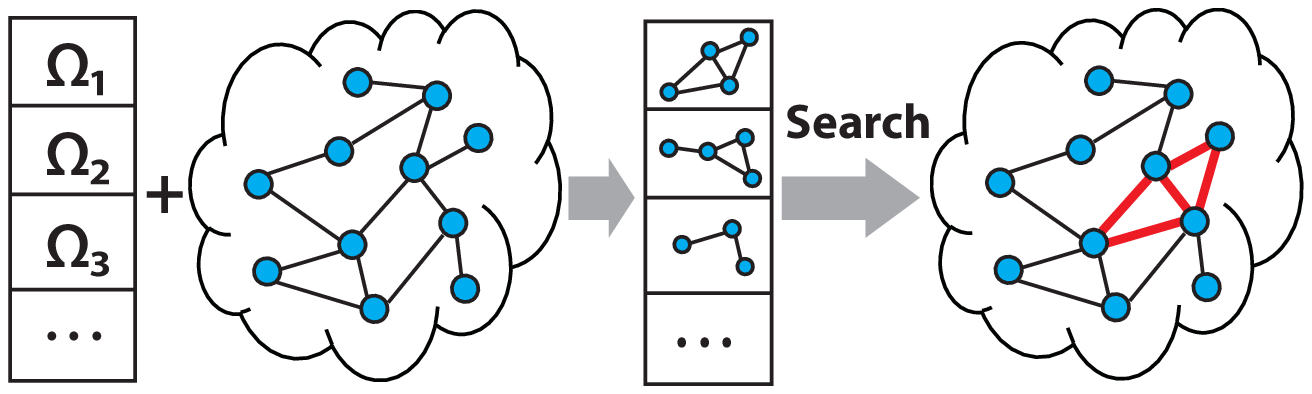,width=2.5in}}
\vspace{-0.1in}
\caption{Embedding and extracting graph watermarks. $\Omega$ is a secret
  random generator seed produced using the secure graph key and user's private key.}
\label{fig:watermark}
\end{figure}

\section{Goals and Attack Models}
\label{sec:goals}
To set the context for the design of our graph watermark system, we need
to first clearly define the attack models we target, and use them to guide
our design goals.

\para{Graph watermarks at a glance.} At a high level, we envision the graph
watermark process to be simple and lightweight, as pictured in
Figure~\ref{fig:watermark}.  Embedding a watermark involves {\em
    overlaying} the original graph dataset ($G$) with a small subgraph ($W$)
  generated using the original graph and a secret random generator seed \fixqing{($\Omega$)}.
Embedding the watermark simply means adding or deleting edges between
existing nodes in the original graph $G$, based on the watermark subgraph $W$.
Each authorized user \fixqing{$i$} receives only a watermarked graph customized for
  them, generated using a random seed \fixqing{$\Omega_i$} securely associated with \fixhan{her}.  The seed
  is generated through cooperation of her private key and a key securely
  associated with the original graph.

If and when the owner detects a leaked version of the dataset, the owner takes the
leaked graph, and ``extracts the watermark,'' by iteratively producing all
known watermark subgraphs $W_i$ associated with $G$ and each of the seeds
$\Omega_i$ associated with an authorized user.
The ``extraction'' process is actually a matching process where the data
owner can conclusively identify the source of the leaked data, by locating the
matching $W_i$ in the leaked graph.

In our model of potential attackers and threats,
we assume that attackers have access to the watermarked graph, but
not the original $G$.  Clearly, if an attacker is able to obtain the
unaltered $G$, then watermarks are no longer necessary or useful.

\para{Attack Models.}
The attackers' goal is to destroy or remove graph watermarks while
  preserving the original graph.  Watermarks are designed to protect the
  overall integrity of the graph data.  Thus we do not consider scenarios
  where the attackers sample the graph or distort it significantly in order
  to remove the watermark.  Doing so would be analogous to removing a portion
  of all pixels from a watermarked video, or applying a high pass frequency
  filter to watermarked music.  Under these constraints, we consider two
  practical attack models below.


\vspace{-.05in}
\begin{packed_itemize}
\item {\em Single Attacker Model.} For a single attacker with access to one
watermarked graph, it will be extremely difficult to detect the watermark
subgraph. Without the key associated with another user, forging a watermark
is also impractical.  Instead, their best attack is to disrupt any potential watermarks
by making modifications, {\em i.e.} add or delete nodes or edges.

\item{\em Collusion Attack Model.} If multiple attackers join their efforts,
  they can recover the orginal graph by comparing
  multiple watermarked graphs,  identifying the differences ({\em i.e.}
  watermarks), and removing them. 
\end{packed_itemize}


\para{Design Goals.}  These attack models help us define the key 
characteristics required for an effective graph watermarking system.
\vspace{-.05in}
\begin{packed_itemize}
\item {\em Low distortion.} The addition of watermarks should have a small
  impact on overall structure of the original graph.  This preserves the
  utility of the graph datasets.
\item {\em Robust to modifications.} Watermarks should be robust to
  modification attacks on watermarked graphs, {\em i.e.} watermarks should
  remain detectable and extractable with high probability, even after the
  graph has been modified by an attacker.
\item {\em Low false positives.}  It is extremely unlikely for our
  system to successfully identify a valid watermark $W_i$ in an unwatermarked
  graph or a graph watermarked by $W_j$ where $i\neq j$. \fixhanx{When we
    embed a single watermark (Section~\ref{sec:basic})}, we also refer to
  this property  as {\em watermark uniqueness}. 
\end{packed_itemize}

Within the constraints defined above, designing a graph
watermark system is quite challenging, for several reasons.  First, the
subgraph that represents the watermark must be relatively ``unique,'' {\em
  i.e.} it is highly unlikely to occur naturally, or intentionally through
forgery.  A second, contrasting goal is that the watermark should not change
the underlying graph significantly (low distortion), or be easily detected.
Walking the fine line between this and properties of ``uniqueness'' likely
means we have to restrict the set of graphs which can be watermarked, {\em
  i.e.} for some graphs, it will be impossible to find a hard to detect
watermark that does not occur easily in graphs.  Finally, since any leaked
graph can have all metadata stripped or modified, watermark embedding and
extraction algorithms must function without any labels or identifiers. Note
that the problem of subgraph matching is known to be
NP-complete~\cite{cook1971complexity}.

\section{Basic Watermark Design}
\label{sec:basic}
We now describe the basic design of our graph watermarking system. The basic
design seeks to embed and extract watermarks on graphs to achieve watermark
uniqueness while minimizing distortion on graph structure. 
Our design has two key components:

\begin{packed_itemize} \vspace{-0.05in}

  \item {\bf Watermark embedding}: The data owner holds a graph key
    $K^G$ associated with a graph $G$ known only to her. Each user $i$
    generates its public-private cryptographic key pair $<K^i_{pub},
    K^i_{priv}>$ through a standard public-key algorithm~\cite{publickey},
    where $K^i_{pub}$ is user $i$'s public key and $K^i_{priv}$ is its
    corresponding private key. To share the graph $G$ with user $i$, the
    system combines input from user $i$ digital signature $K^i_{priv}(T)$ and graph
    key $K^G$ to form a random generator seed $\Omega_i$, and use
    $\Omega_i$ to
    generate a watermark graph $W_i$ for graph $G$.  The system embeds $W_i$
    into $G$ by selecting and modifying a subgraph of $G$ that contains the
    same number of nodes as $W_i$.  The resulting graph $G^{W_i}$ 
    is given to user $i$ as the watermarked graph.

\item {\bf Watermark extraction}: To identify the watermark in $G'$, we use
  $\Omega_i$ to regenerate $W_i$ and then search for the existence of
  $W_i$ within $G'$, for each user $i$.
  \vspace{-0.02in}
\end{packed_itemize}

In this section, we focus on describing the detailed procedure of these two
components.  We present detailed analysis on
the two fundamental properties of graph watermarks, {\em i.e.\/} uniqueness
and detectability in Section~\ref{sec:analysis}.



\subsection{Watermark Embedding}
\label{subsec:embedding}

The most straightforward way to embed a watermark is to directly attach the
watermark graph to the original graph. That is, if $W_i$ represents the
watermark graph for user $i$, and $G$ represents the original graph to be
watermarked, the embedding treats $W_i$ as an independent graph, and adds new
edges to connect $W_i$ to $G$.  However, this approach has two
disadvantages. {\em First}, direct graph attachment makes it easy for
external attackers to identify and remove $W_i$ from $G$ without using
graph key $K^G$ and user $i$'s signature $K^i_{priv}(T)$.  New edges connecting $W_i$ and $G$ must be carefully chosen
to reduce the chance of detection, and this is a very challenging task.  {\em
  Second}, attaching a (structurally different) subgraph $W_i$ directly to a
graph $G$ introduces larger structural distortions.


Instead, we propose an alternative approach that embeds the watermark graph
``in-band.''  That is, the embedding process first selects $k$ nodes ($k$ is
the number of nodes in $W_i$) from $G$ and identifies $S$, the corresponding
subgraph of $G$ induced by these $k$ nodes. It then modifies $S$ using $W_i$
without affecting any other nodes in $G$.  Because the watermark graph $W_i$
is naturally connected with the rest of the graph, both the risk of
detection and amount of distortion induced on the original graph $G$ are
significantly lower than those of the direct attachment approach.

We now describe the details of ``in-band'' watermark embedding,
which consists of four steps: (1) generating random generator seed
$\Omega_i$ from user $i$'s signature $K^i_{priv}(T)$ and graph key $K^G$; (2) generating the watermark graph
$W_i$ from the seed $\Omega_i$; (3) selecting the placement of $W_i$ on
$G$ by picking $k$ nodes from $G$ and identifying 
the corresponding subgraph $S$ induced by these $k$ nodes; 
and (4) embedding $W_i$ into $G$ by modifying $S$ to match structure of 
$W_i$. 


\para{Step 1: Generating random generator seed $\Omega_i$.} To generate
an unforgetable watermarked graph, we generate a random generator seed
$\Omega_i$~\cite{gilbert1959random} using user $i$'s signature $K^i_{priv}(T)$
and graph key $K^G$. 

Suppose the system intends to generate a watermarked version of graph $G$ at
time $T$ to share with a specific user $i$.  We begin by first sending user
$i$ with the current timestamp $T$. User $i$ responds with its signature
$K^i_{priv}(T)$, by encrypting the timestamp with its private key $K^i_{priv}$.  Before proceeding further, 
we validate the result $K^i_{priv}(T)$ to ensure it is from $i$, by
decrypting it with user $i$'s public key $K^i_{pub}$.   If the timestamps match, we
combine the signature $K^i_{priv}(T)$ and the graph key $K^G$ to
form the seed of the random graph generator for user $i$, $\Omega_i$.
A mismatch may indicate that user $i$ is a potential malicious user.

Note that $\Omega_i$ cannot be formed alone by the data owner who only holds
the graph key $K^G$, or by user $i$ who only owns its private key
$K^i_{priv}$. Therefore, results computed using seed $\Omega_i$, including
the random graph $W_i$ generated (Step 2) and the choice of graph nodes to
mark (Step 3), cannot be derived independently by the data owner or
identified by user $i$. 

\para{Step 2: Generating the watermark graph $W_i$.} 
We generate $W_i$ as a random graph with edge probability
of $p$ and node count $k$ ($k<<n$ where $n$ is the number of nodes in
$G$). The random edge generator uses $\Omega_i$ as the
seed~\cite{gilbert1959random}. The $k$ nodes of $W_i$ are ordered as 
$\{v_1, v_2, ..., v_k\}$. 



The key factor in this step is choosing the node count $k$ and the
edge probability $p$. As we will show in Section~\ref{subsec:uniqueproof},
the two parameters must satisfy the following requirement to ensure watermark
uniqueness:
\vspace{-0.05in}
\begin{equation} 
k\geq (2+\delta)\log_q{n}\vspace{-0.05in}
\end{equation}
where $q=\frac{1}{\max{(p,1-p)}}$ and $\delta$ is a constant $>0$.
Furthermore, it is easy to prove that  
$p=\frac{1}{2}$ minimizes the node count $k$ and the average edge
count $p\cdot {k \choose 2}$ of the watermark graph $W_i$. Intuitively, using
a compact watermark graph not only reduces the amount of distortion to $G$,
but also improves its robustness against malicious attacks. Therefore, we
configure $p=\frac{1}{2}$ and therefore $k=(2+\delta)\log_2{n}$. 
This produces a reasonably sized watermark graph ($k<$100) even for extremely
large graphs, {\em e.g.} the complete Facebook social graph ($\sim$1 billion
nodes in 2014).


\para{Step 3: Selecting the watermark placement on graph $G$.}  Next,
we identify $k$ nodes from $G$ and its corresponding subgraph $S$ to embed
the watermark graph.  To ensure reliable extraction,  we must
choose these $k$ nodes carefully, meeting these two
requirements. {\em First}, using $\Omega_i$ generated in
Step 1, the $k$ nodes must be chosen deterministically and remain
distinguishable from  
the other nodes of $G$.  {\em Second}, the set of the $k$ nodes
chosen for different watermarks (or different $\Omega_i$
values) must be easily distinguishable from each other to reinforce watermark
uniqueness. 
Our biggest challenge in meeting these requirements is that we cannot use
node IDs to distinguish nodes from each other.  Node IDs or any type of
metadata can be easily altered or stripped by attackers before or after
leaking $G'$, thereby making extraction impossible.

We address this challenge by using local graph structure around each node as its
``label.''  Specifically, we define a {\em node structure description}
(NSD) as a distinguishable feature of each node.  A node $v$'s NSD is
represented by an array of $v$'s sorted neighbor degrees. For example, if
node $v$ has three neighbors with node degrees 2, 6, 4, respectively, then
$v$'s NSD label is ``2-4-6.''  We then hash $v$'s NSD label into a numerical
value using a secure one-way hash {\em e.g.} SHA-1~\cite{sha-1}, and refer to
the result as node $v$'s {\em NSDhash}.

Next, we use $\Omega_i$ as the seed to randomly generate $k$ hash values, and
use each as an index ({\em e.g.} using a mod function) to identify a node in
$G$.  It is possible that multiple nodes have the same NSDhash, {\em i.e.} a
collision.  If this happens, we resolve the collision by using $\Omega_i$
again as an index into a sorted list of these nodes with the same
NSDhash.  The nodes can be sorted by any deterministic order, {\em e.g.} node
IDs in the original graph. Note that this process is only required for
embedding (and not extraction), so any deterministic order chosen by the
graph owner will suffice.

At the end of this step, we obtain $k$ ordered nodes from $G$, $X=\{x_1, x_2,
..., x_k\}$, and the corresponding subgraph $S=G[X]$ induced by the node set
$X$ on $G$.


\para{Step 4: Embedding the watermark graph $W_i$ into graph $G$.}  In this
step, we embed the watermark graph $W_i$ by modifying the subgraph $S=G[X]$
to match $W_i$.  Specifically, we match each (ranked) node in $W_i$, $\{v_1,
v_2, ..., v_k\}$ with the corresponding node in $S$ (or $X$),
$\{x_1,x_2,...,x_k\}$, {\em i.e.\/} $f: W \rightarrow S, f(v_i)=x_i$. And
once the nodes are mapped, we then apply an XOR operation on each edge of the
two graphs. That is, \fixhan{we consider the connection between $(v_i,v_j)$
  or $(x_i, x_j)$
  as one bit, {\em i.e.} an edge between $(v_i,v_j)$ or $(x_i, x_j)$ means $1$ and no edge
  between $(v_i,v_j)$ or $(x_i, x_j)$ means $0$. }If an edge $(v_i,v_j)$ exists in $W_i$, we modify the
corresponding edge value in $S$ from $(x_i, x_j)$ to $(x_i, x_j) \oplus 1$;
and if no edge $(v_i, v_j)$ exists in $W_i$, we modify the edge value $(x_i,
x_j)$ to $(x_i, x_j) \oplus 0$.  When the above edge modification process
ends, we also explicitly create edges between nodes $x_i$ and $x_{i+1}$ to
maintain a connected subgraph.  As a result, we transfer the subgraph $S$
into $S^{W_i}$ with the watermark graph $W_i$ embedded.  The reason for
choosing the XOR operation is that it allows the same watermark to be
embedded in the graph multiple times (at multiple locations), thus reducing
the risk of the watermark being detected and destroyed by attacks such as
frequent subgraph mining. We will discuss this in more details in
Section~\ref{sec:advanced}.

At the end of this step, we obtain a watermarked graph $G^{W_i}$ for user
$i$.  Before we distribute it to user $i$, we anonymize $G^{W_i}$ by
completely (randomly) reassigning all node IDs. Such
anonymization not only helps to protect user privacy, but 
also minimizes the opportunity for colluding attackers with multiple
watermarked graphs to identify the embedded
watermark (see Section~\ref{sec:advanced}).

\subsection{Watermark Extraction}
\label{subsec:extraction}
The watermark {\em extraction} process determines if a watermark graph $W_i$ is
embedded in a target graph $G'$.  If so, then $G'$ is a legitimate copy
distributed to user $i$.   The extraction process faces two key
challenges. {\em First}, the target graph $G'$, likely a watermarked version
of the original graph $G$, can easily be modified by users/attackers during
the graph distribution process.  In particular, all node IDs can be
very different from that of the original $G$.  Thus extraction 
cannot rely on node IDs in $G'$. {\em Second}, identifying
whether a subgraph exists in a large graph is equivalent to a subgraph
matching problem, known to be NP-complete.  To handle massive graphs, we need a
computationally efficient algorithm.



Our design addresses these two challenges by leveraging knowledge on the
structure of the subgraph where the watermark was embedded. This eliminates
the dependency on node IDs while significantly reducing the
search space during the subgraph matching process.  We describe our
proposed design in detail below. 

\para{Step 1: Regenerating the watermark.}  The owner performs the
extraction, and has access to the original graph $G$, graph key
  $K^G$, and user's signature 
$K^i_{priv}(T)$. For each user $i$, we combine its signature
$K^i_{priv}(T)$ and graph key $K^G$ to generate the random generator seed
$\Omega_i$ for that user. Then, we follow step $2-4$ described in
Section~\ref{subsec:embedding} to regenerate the watermark graph $W_i$,
identify the $k$ ordered nodes from $G$ and their NSD labels, and finally the
modified subgraph $S^{W_i}$ that was placed on a ``clean'' version of the
watermarked graph $G^{W_i}$.


\para{Step 2:  Identifying candidate watermark nodes on $G'$.}  Given the $k$
nodes $X=\{x_1,x_2,...,x_k\}$ identified from the original graph $G$, in this
step we need to identify for each $x_j$, a set of candidate nodes on the
target graph $G'$ that can potentially become $x_j$.  We accomplish this by
identifying all the nodes on $G'$ whose NSD labels are the same of
$x_j$ in the ``clean'' version of the watermarked graph $G^{W_i}$. Since multiple nodes can have the same NSD label, this
process will very likely produce multiple candidates. To shrink the 
candidate list,  we examine the connectivity between candidate nodes of $X$
on $G'$ and compare it to that among $X$ on $G^{W_i}$. If two
nodes $x_m$ and $x_n$ are connected in $G^{W_i}$,  we prune their candidate node
lists by removing any candidate node of $x_m$ that has no edge with any
candidate node of $x_n$ on $G'$ and vice versa.  This pruning process
dramatically reduces the search space.  After this step, we obtain
for each $x_i$ the candidate node list $C_i$ on the target graph $G'$. 



\para{Step 3: Detecting watermark graph $S^{W_i}$ on $G'$.} Given the candidate node list of each
node in $X$, we now search for the existence of $S^{W_i}$ on the target graph
$G'$. For this we apply a recursive algorithm to enumerate and prune the 
combinations of the candidate sets, until we identify $S^{W_i}$ or
exhaust all the node candidates.  The detailed algorithm is listed in
Algorithm 1.  In this algorithm, we use a node list $Y$ to record the list of
nodes in $G'$ which we have already finalized as the corresponding nodes in
$S^{W_i}$,  {\em i.e.} $Y=\{y_1, y_2, ..., y_m\}$
($m\leq k$). When the process starts, $Y=\emptyset$, $m=0$.




\begin{algorithm}[t]
\small
\caption{Recursive Algorithm for Detecting $S^{W_i}$ on $G'$.}
\begin{algorithmic}[1]
\STATE \textbf{Function:} SubgraphDetection($G'$, $S^{W_i}$, $\{C_1, C_2,
..., C_k\}$, $Y$, $m$)
\STATE{\textbf{Input:} Graph $G'$, watermark graph $S^{W_i}$, candidate node list $C_i$ for each node $x_i$ in $X$, identified node list
  $Y=\{y_1, y_2, ..., y_m\}$ ($m < k$)}
\STATE{\textbf{Output:} Identified node list $Y$}
\FOR{each node $c \in C_{m+1}$}
         \IF{$c \not \in Y$ and each edge $(c, y_t)$ in $G'$ ($t=1..m$) is the same as the edge
           $(x_{m+1}, x_t)$ in $S^{W_i}$ ($t=1..m$)}
                   \STATE{$Y= Y \cup c$}
                   \STATE{ $m=m+1$ }
                    \IF{ $m==k$}
                                     \STATE{Return $Y$}
                     \ELSE
                                      \STATE{SubgraphDetection($G'$, $S^{W_i}$, $\{C_1,
                     C_2, ..., C_k\}$,$Y$, $m$)}
                   \ENDIF
            \STATE{$Y = Y \setminus c$}
            \STATE{$m = m-1$}
           \ENDIF
\ENDFOR
\STATE{Return $Y$}
\end{algorithmic}
\end{algorithm}

{\em Discussion.} The above design shows that our watermark extraction
algorithm simplifies the subgraph search problem by restricting it to a small
number of selected nodes from a graph, thus avoiding the NP-complete subgraph
matching problem.  Also note that we target real graphs with very
high levels of node heterogeneity, {\em e.g.} small-world, power-law or highly
clustered graphs, which are very far from the uniform, lattice-like graphs
that are the worst case scenarios for graph isomorphism.  In practice, our system
can efficiently extract watermarks from real, million-node graphs, and do so
in a few minutes on a single commodity server (Section~\ref{subsec:eff}).

\section{Fundamental Properties}
\label{sec:analysis}
Having described the basic watermark system, we now present detailed analysis
on its two fundamental properties: {\em watermark uniqueness} where each
watermark must be unique to the corresponding user, and {\em watermark
  detectability} where the presence of a watermark should not be easily  detectable by
external users without the knowledge of the seed $\Omega_i$
  associated with user $i$.

\begin{table*}[t]
\centering
\caption {Suitability of watermarking for 48 of today's network graphs,
determined by comparing their node degree distribution $[N_{min}(G),N_{max}(G)]$
  and $k$-node subgraph density $[D_{min}(k),D_{max}(k)]$ to those of the embedded
  watermark graphs. 35 out of these 48 graphs are  suitable for
  watermarking. }
\label{tab:graphs}
\resizebox{2.05\columnwidth}{!}{
\begin{tabular}{|c|c |c c c  |c | c c | c c | c |}
\hline
{Graph}&\multirow{2}{*} {Graph} & \multirow{2}{*} {\# of Nodes} & \multirow{2}{*} {\#
  of Edges} &\multirow{2}{*} {Avg. Deg.}  &\multirow{2}{*} {$k$} &
\multicolumn{2}{|c|} {Node Degree Criterion} & \multicolumn{2}{|c|} {$k$-node Subgraph
  Density Criterion} & \multirow{2}{*}{Suitability}\\
\cline{7-10}
{Category}&{} & {} & {}& {} & {}& $(k+1)/2$ & [$N_{min}(G),N_{max}(G)$] & Watermark
& [$D_{min}(k),D_{max}(k)$] & { }\\  
\hline
\multirow{3}{*}{Facebook}&Russia & 97,134 & 289,324 & 6.0  & 39 & 20 & [1, 748] & 390 & [45, 701] &{\bf Yes} \\
{}&L.A. & 603,834 & 7,676,486 & 25.4  & 45 & 23 &[1, 2141] & 517 & [44, 975]& {\bf Yes}\\
{}&London & 1,690,053 & 23,084,859 & 27.3 & 48 &24 & [1, 1483] & 588 &  [47,
1128] & {\bf Yes} \\
\hline
{}&Epinions (1)& 75,879&405,740&10.7 &38&19&[1,3044]&370&[47,649]& {\bf Yes} \\
{}&Slashdot (08/11/06) &77,360&507,833&13.1&38&19&[1, 2540]	&370&[38, 668]& {\bf Yes} \\
{}&Twitter &81,306&1,342,303&33.0&38&19&[1, 3383]&370&[44, 703]&{\bf Yes} \\
{Other}&Slashdot (09/02/16)& 81,867&497,672&12.2&38&19&[1, 2546]&370&[38, 669]&{\bf Yes}\\
{Social}&Slashdot (09/02/21)&82,140&500,481&12.2&38&19&[1, 2548]&370&[38, 669]& {\bf Yes}\\
{Networks}&Slashdot (09/02/22)&82,168&543,381&13.2&38	&19&[1, 2553]&370&[38, 673]&{\bf Yes} \\ 
{}&GPlus&107,614&12,238,285&227.5&39&20&[1, 20127]&389.5&[53, 741]&{\bf Yes} \\
{}&Epinions (2)& 131,828&711,496&10.8&40&20&[1, 3558]&409.5&[51, 780]&{\bf Yes}\\
{}&Youtube&1,134,890&2,987,624&5.3&47&24&[1, 28754]&563.5&[47, 815]&	{\bf Yes} \\
{}&Pokec & 1,632,803&	22,301,964&27.3&	48&24&[1, 14854]&587.5&[47, 979]&	{\bf Yes} \\
{}&Flickr & 1,715,255 & 15,555,041 & 18.1  & 48 &24 & [1, 27236] & 588 & [51, 1128] & {\bf Yes} \\
{}&Livejournal & 5,204,176 & 48,942,196 & 18.8  & 52 & 26 & [1, 15017]  & 689& [51, 1326] & {\bf Yes} \\ 
\hline
{Citation}& Patents& 23,133 & 93,468 & 8.1 & 34 & 17 & [1, 280] & 297 & [37, 373]& {\bf Yes} \\
{Networks}&ArXiv (Theo. Cit.)& 27,770&352,304&25.4&34&17&[1, 2468]&297&[36, 534]& {\bf Yes} \\
{}&ArXiv (Phy. Cit.)&34,546&420,899&24.4&35&18&[1, 846]&314.5&[36, 544]&{\bf Yes}\\
\hline
{}&ArXiv (Phy.)&12,008&118,505&19.7&32&16&[1, 491]&263.5&[45, 496]& {\bf Yes} \\
{Collaboration}&ArXiv (Astro)& 18,772&198,080&21.1&33&17&[1, 504]&280&[37, 528]& {\bf Yes} \\
{Networks}&DBLP&317,080&1,049,866&6.6&43&22&[1,343]&472.5&[43,903]& {\bf Yes} \\
{}&ArXiv (Condense)& 3,774,768 & 16,518,947 & 8.8 & 51 & 26 & [1, 793] & 663 & [50,1063] & {\bf Yes} \\
\hline
{Communication}&Email (Enron)&36,692&183,831&10.0&35&18&[1,1383]&314.5&[43,515]& {\bf Yes}\\
{Networks}&Email (Europe)&265,214&365,025&2.8&42&21&[1,7636]&451&[74,683]& {\bf Yes}\\
{}&Wiki & 2,394,385 & 4,659,565 & 3.9& 49 & 25 & [1, 100029]& 612 &[65, 1066] & {\bf Yes}\\
\hline
{}&Stanford&281,903&1,992,636&14.1&42&21&[1,38625]&451&[66,861]& {\bf Yes}\\
{Web}&NotreDame& 325,729&1,103,835 &6.8&43&22&[1,10721]&472.5&[60,903]& {\bf Yes}\\
{graphs}&BerkStan&685,230&6,649,470&19.4&45&23&[1,84230]&517&[79,990]& {\bf Yes}\\
{}&Google& 875,713 & 4,322,051 & 9.9 & 46  & 23 & [1, 6332] & 540 &[72, 1033] & {\bf Yes}\\
\hline
{Location based}&Brightkite&58,228&214,078&7.4&37&19&[1,1134]&351&[41,665]& {\bf Yes}\\
{OSNs}&Gowalla&196,591&950,327&9.7&41&21&[1,14730]&430&[44,723]& {\bf Yes}\\
\hline
{}&Oregon (1) &11,174&23,409&4.2&31&16&[1,2389]&247.5&[95,352]& {\bf Yes} \\ 
{AS}&Oregon(2) & 11,461&32,730&5.7 &32&16&[1,2432]&263.5&[79,476] & {\bf Yes} \\ 
{Graphs}&CAIDA & 26,475&53,381&4.0 &34&17&[1,2628]&297&[113,436] & {\bf Yes} \\ 
{}&Skitter & 1,696,415 &  11,095,298 & 13.1  & 48 & 24 & [1, 35455] & 588 & [52, 1128] & {\bf Yes} \\ 
\hline
\hline
{}&Gnutella (02/08/04) & 10,876&39,994&7.4&31&16&[1,103]&247.5&[30,80] &  No \\
{}&Gnutella (02/08/25)&22,687&54,705&4.8 &34&17&[1,66]&297&[0,0]& No \\
{P2P networks}&Gnutella (02/08/24)&26,518&65,369&4.9&34&17&[1,355]&297&[0,44]&  No \\
{}&Gnutella (02/08/30) &36,682&88,328&4.8 &35&18&[1,55]&314.5&[35,70]&  No \\
{}&Gnutella (02/08/31) & 62,586 & 147,892 & 4.7 & 37 & 19 & [1, 95]& 351  & [39,76] &  No \\
\hline
{}&Amazon (03/03/02) &262,111&899,792&6.9 &42&21&[1,420]&451&[88,132]& No\\
{Amazon}&Amazon (2012) &334,863&925,872&5.5&43&22&[1,549]&472.5&[0,0]& No\\
{Co-purchasing}&Amazon (03/03/12) &400,727&2,349,869&11.7 &43&22&[1,2747]&472.5&[52,285]& No\\
{Networks}&Amazon (03/06/01) & 403,394 & 2,443,408 & 12.1 & 43 & 22 & [1, 2752] & 473  &  [52, 333] & No\\
{}&Amazon (03/05/05) &410,236&2,439,437&11.9 &43&22&[1,2760]&472.5&[50,333]& No\\
\hline
{Road}&Pennsylvania&1,088,092&1,541,898&2.8&47&24&[1,9]&563.5&[0,0]& No\\
{Networks}&Texas& 1,379,917&1,921,660&2.8 &47&24&[1,12]&563.5&[0,0]&No\\
{}&California& 1,965,206 & 2,766,607 & 2.8 & 49 & 25 &  [1, 12] & 612 &  [0, 0]& No\\
\hline
\end{tabular}
}
\end{table*}

\subsection{Watermark Uniqueness}
\label{subsec:uniqueproof}

As a proof of ownership, each embedded watermark should be unique for its
user. That is, given the original graph $G$ and the seed $\Omega_i$
associated with user $i$,
 the embedded watermark graph $S^{W_i}$ should not be isomorphic
to any subgraph of $G^{W_j}$ ($i \neq j$) where $G^{W_j}$ is the watermarked
graph for user $j$.  At the same time, $S^{W_i}$ should not be isomorphic to  any
  subgraph of the original graph $G$.   In the following, we
show that with high probability, our proposed graph watermark system produces
unique watermarks for any graph $G$. 

\begin{theorem}
Given a graph $G$ with $n$ nodes, let $k \geq (2+\delta) \log_{2} n$ for a
positive constant $\delta >0$. We apply the following process to create a watermarked graph $G^{W_i}$ for user $i$: 

\begin{packed_itemize}\vspace{-0.05in}
\item We create $k$ nodes, $V=\{v_1, v_2, ..., v_k\}$, and generate a random
  graph $W_i$ on $V$ with an edge probability of $\frac{1}{2}$. 
\item We randomly select $k$ nodes, $X$ = $\{x_1, x_2, ..., x_k\}$ from
  $G$, and identify the subgraph corresponding to these $k$ nodes $S=G[X]$. 
\item Using $W_i$, we modify $S$ as follows: we first map each node $x_i$ in
  $X$ to a node $v_i$ in $V$. Let
  $e(u,v)=1$ denote an edge exists between node $u$ and $v$ and $e(u,v)=0$
  denote otherwise. We modify each $e(x_i,x_j)$ in $S$ to $e(x_i,x_j) \oplus
  e(v_i,v_j)$.  We then explicitly connect nodes $x_i$ and $x_{i+1}$, {\em
    i.e.\/} $e(x_i,x_{i+1})=1$. The resulting $S$ now becomes $S^{W_i}$, and
  the resulting $G$ becomes $G^{W_i}$. 
\vspace{-0.05in}
\end{packed_itemize}
Let $G^{W_l}$ denote a watermarked graph for user $l$ ($l\neq i$), built
using a different seed $\Omega_l$. 
 Then with low
probability,  any subgraph of $G^{W_l}$ or $G$ is isomorphic to $S^{W_i}$.
\label{the1}
\end{theorem}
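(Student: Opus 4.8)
The plan is a first‑moment (union‑bound) argument that exploits the fact that, up to the forced Hamiltonian path $x_1\!-\!x_2\!-\!\cdots\!-\!x_k$, the graph $S^{W_i}$ is a uniformly random graph whose edges are independent of anything the adversary could hold. I model the seed $\Omega_i$ as ideal randomness, so $W_i$ is a genuine $G(k,\tfrac12)$ graph with i.i.d.\ fair‑coin edge indicators, drawn from a PRG block disjoint from the one used for node selection in Step~3. Since XORing against the fixed bit pattern of $S$ is, slot by slot, a bijection of $\{0,1\}$, the graph $S\oplus W_i$ is again distributed as $G(k,\tfrac12)$ on $X=\{x_1,\dots,x_k\}$; after forcing $e(x_i,x_{i+1})=1$, the law of $S^{W_i}$ is: the $k-1$ ``path'' slots equal $1$ deterministically, and the remaining $\binom{k}{2}-(k-1)=\binom{k-1}{2}$ slots are i.i.d.\ $\mathrm{Bernoulli}(\tfrac12)$. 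Because $\Omega_i$ and $\Omega_l$ are derived from independent user keys, these coins are independent of the fixed input $G$ and, conditionally on $\Omega_l$, of the fixed graph $G^{W_l}$; note $|V(G)|=|V(G^{W_l})|=n$ since embedding only rewires edges.

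Next I fix a target $H\in\{G,G^{W_l}\}$, regarded as a fixed $n$‑vertex graph, and condition additionally on $X$. That $S^{W_i}$ is isomorphic to a subgraph of $H$ means some injective $\phi:X\to V(H)$ realizes it, and there are at most $n^k$ such maps. For a fixed $\phi$ to realize an induced‑subgraph isomorphism, it is necessary that every path pair map to an edge of $H$ --- a deterministic condition on $\phi$ and $H$ --- and that for each of the $\binom{k-1}{2}$ non‑path pairs the corresponding coin of $S^{W_i}$ equal the bit prescribed by $H$; these are independent events of probability $1/2$, so $\Pr[\phi\text{ works}]\le 2^{-\binom{k-1}{2}}$. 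A union bound over the $\le n^k$ maps and over the two choices of $H$, then averaging over $X$, shows that the probability that $S^{W_i}$ is isomorphic to a subgraph of $G^{W_l}$ or of $G$ is at most
\[
2\,n^{k}\,2^{-\binom{k-1}{2}}\ =\ 2\cdot 2^{\,k\log_2 n-\binom{k-1}{2}}.
\]
Substituting $k\ge(2+\delta)\log_2 n$, the exponent equals $-\tfrac{\delta(2+\delta)}{2}(\log_2 n)^2+O(\log n)\to-\infty$, so this probability is super‑polynomially small in $n$. This is exactly where the constant $2$ in ``$2+\delta$'' is used: the per‑embedding gain $2^{-\binom{k-1}{2}}$, of order $2^{-k^2/2}$, must beat the $n^k=2^{k\log_2 n}$ counting factor, which forces $k$ to exceed $2\log_2 n$ by a constant factor.

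The one genuinely delicate step is the probabilistic modeling: arguing rigorously that the $\binom{k-1}{2}$ free edge slots of $S^{W_i}$ are uniform and jointly independent of \emph{both} comparison targets --- of $G$ (immediate, $G$ is a fixed input) and of $G^{W_l}$ (this needs the independence of the per‑user seeds, hence of $W_i$ from the entire construction of $G^{W_l}$) --- as well as independent of the node set $X$ actually selected (which needs disjoint PRG blocks for Steps~2 and 3). Everything after that is a routine union bound. If instead one reads ``subgraph'' in the non‑induced sense, the per‑$\phi$ bound only degrades to $2^{-\bar e(\phi)}$, where $\bar e(\phi)$ is the number of non‑edges of $H$ inside $\phi(X)$, and the union bound then closes only under a density hypothesis --- $G$, and hence $G^{W_l}$ which differs from it in merely $O(k^2)$ slots, must contain no $k$‑subset that is within $o(k^2)$ edges of a clique --- which is precisely the $k$‑node subgraph density criterion verified empirically in Table~\ref{tab:graphs} and which fails only for pathological inputs such as road or co‑purchasing networks.
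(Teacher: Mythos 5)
Your proposal is correct and follows essentially the same route as the paper's proof: the key observation that XORing with fair-coin edges makes the $\binom{k}{2}-(k-1)$ free slots of $S^{W_i}$ i.i.d.\ Bernoulli$(\tfrac12)$ and independent of both $G$ and $G^{W_l}$, a per-placement bound of $2^{-(\binom{k}{2}-(k-1))}$ (your $\binom{k-1}{2}$ is the same quantity; the paper's $\beta\le 1$ factor plays the role of your deterministic path-edge condition), and a union bound over at most $n^k$ ordered node placements, with $k\ge(2+\delta)\log_2 n$ driving the bound to zero. Your added remarks on seed independence and on the induced versus non-induced reading of ``subgraph'' go slightly beyond what the paper makes explicit, but the core argument is the same.
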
 


\begin{proof}
We first show that with low
probability,  any subgraph of $G^{W_l}$ is isomorphic to
$S^{W_i}$.  Let $Y=\{y_1, y_2, ...y_k\}$ be a set of ordered nodes in $G^{W_l}$,
where each $y_i$ maps to a node $x_i$ in $X$. We define an event
$\mathcal{E}_Y$ occurs if the subgraph $G^{W_l}[Y]$ is isomorphic to $G^{W_i}[X]$ or $S^{W_i}$. 
Then the event $\mathcal{E}$ representing the fact that there exists at least
one subgraph on $G^{W_l}$ that is isomorphic to $S^{W_i}$ is the union of
events $\mathcal{E}_Y$ on all possible $Y$, {\em i.e.\/} $\mathcal{E}=\cup_Y
\mathcal{E}_Y$. 


Next, we compute the probability of event $\mathcal{E}$ by those of individual
  event $\mathcal{E}_Y$. Specifically, we first show that the probability of
  an edge exists between node $x_i$ and $x_j$ ($j\neq i+1$) in $S^{W_i}=G^{W_i}[X]$ is
  $\frac{1}{2}$.  This is because each edge in the random graph $W_i$ is
  independently generated with probability $\frac{1}{2}$.  After performing
  the XOR operation between $W_i$ and $S$, the probability of an edge exists
  between $x_i$ and $x_j$  ($j \neq i+1$) on $S^{w_i}$ is $\frac{1}{2}\cdot p_{ij} + (1-p_{ij}) \cdot
  \frac{1}{2}=\frac{1}{2}$ where $p_{ij}$ is the probability that an edge
  exists between $x_i$ and $x_j$ on $S$.  Thus the result of XOR between $W_i$
    and $S$ is also a random graph, and its edge generation is
    independent of that in $G^{W_l}, l \neq i$.  Furthermore, it is easy to
    show that our design applies XOR
  operations on ${k \choose 2}-(k-1)$ node pairs on the $k$ nodes, and
  each node pair has an edge with a probability of
  $\frac{1}{2}$. Thus, the probability of a subgraph $G^{W_l}[Y]$ being
  isomorphic to $S^{W_i}$ is 
$P(\mathcal{E}_Y)=\frac{1}{2}^{{k
    \choose 2} -(k-1)}\cdot \beta$ where $\beta\leq 1$ is the probability
that every ($y_i$, $y_{i+1}$) pair in $G^{W_l}[Y]$ is connected. Thus
$P(\mathcal{E}_Y) \leq \frac{1}{2}^{{k
    \choose 2} -(k-1)}$.


Since $\mathcal{E}=\cup_Y{\mathcal{E}_Y}$ and there are less than $n^k$ possible sets of
$k$ ordered nodes in $G^{W_l}$, we use the Union Bound to compute the probability
of event $\mathcal{E}$ as follows:\vspace{-0.08in}
\begin{equation}
\small
\begin{split}
&P(\mathcal{E})<n^k \cdot P(\mathcal{E}_Y)\leq n^k \cdot \frac{1}{2}^{{k \choose 2} -(k-1)}\\
& = 2^{\frac{k^2}{2+\delta}} \cdot
\frac{1}{2}^{\frac{k^2-3k}{2}+1} =\frac{1}{2}^{\frac{\delta
  k^2}{2(2+\delta)}-\frac{3k}{2}+1}
\end{split}
\label{eq:dj}\vspace{-0.05in}
\end{equation}
The above equation shows that the probability $P(\mathcal{E})$ reduces
exponentially 
to $0$ as $k$ increases. 

Finally, we
can apply the same method to show
that with low
probability,  any subgraph of $G$ is isomorphic to
$S^{W_i}$. This is because 
  the XOR operations between $W_i$ and $S$ produce a random graph 
that is independent of $G$. This concludes our proof. 
\end{proof}

\subsection{Watermark Detectability}
\label{subsec:detectability}
In addition to providing uniqueness, a practical watermark design should also
offer low detectability, {\em i.e.\/} with low probability each watermark
gets identified by external users/attackers. This means that without knowing
the seed $\Omega_i$ associated with user $i$,
the embedded watermark graph $S^{W_i}$ should not be easily distinguishable from the rest
of the graph $G^{W_i}$.  Therefore, the detectability would depend
heavily on the topology of the original graph $G$, {\em i.e.\/} a watermark graph can be
well hidden inside a graph $G^{W_i}$ if its structural property is not too different
from that of $G$. 

In the following, we examine the detectability of watermarks in terms of {\em
  a graph's suitability for watermarking}.  This is because directly quantifying
the detectability is not only highly computational
expensive\footnote{Each embedded watermark graph is similar to a random graph with
  $\frac{1}{2}$ edge probability. Thus the detectability is low if certain
  subgraphs of $G$ are also random graphs with similar edge probabilities. Yet
  identifying these subgraphs (and the embedded watermark graph) on a large
  graph incurs significant computation overhead.}, but also lacks a proper
metric.  Instead, we  cross-compare the key structural properties of $S^{W_i}$
and $G$, and define $G$ as being 
suitable for watermarking if its structure properties are similar to that of
$S^{W_i}$, implying a low watermark detectability. 




\para{Suitability for Watermarking.}  
To evaluate a graph's suitability for watermarks,  we first study the key
structure property of the embedded watermark graph 
$S^{W_i}$. To guarantee watermark uniqueness and minimize distortion,  the
watermark graph $S^{W_i}$ needs to be a random graph with an edge probability
of $\frac{1}{2}$ (except for the fixed edges between $x_i,x_{i+1}$ node pairs), and include $k=(2+\delta)\log_2{n}$ nodes.  Thus its average node degree is at least
$(k+1)/2$ and its average graph density is $({k \choose
  2}+k-1)/2$.

\begin{table}[H]
\centering
\caption {Size and density of subgraph on nodes with degree $> (k+1)/2$ in
each graph. Size is the number of subgraph nodes, and density is quantified
as average edges each node having inside the subgraph. }
\label{tab:subgraphs}
\resizebox{1\columnwidth}{!}{
\begin{tabular}{|c|c c | c c | c |}
\hline
\multirow{2}{*} {Graph} & \multicolumn{2}{|c|} {Subgraph}  &
\multicolumn{2}{|c|} {Watermark Graph} & \multirow{2}{*}{Suitability}\\
\cline{2-5}
{} & {Node \#} & {Avg. Deg.}& {$k$} & {Avg. Deg.}&{}\\  
\hline
Russia & 4,794& 22.2 & 39 & 20.0 &{\bf Yes} \\
L.A.  &196,174& 49.2 & 45 & 23.0 & {\bf Yes}\\
London &562,075& 56.1 & 48 &24.5 & {\bf Yes} \\
\hline
Epinions (1)&7,083&68.7&38&19.5& {\bf Yes} \\
Slashdot (08/11/06) &9,908&53.4&38&19.5& {\bf Yes} \\
Twitter &34,014&60.5&38&19.5&{\bf Yes} \\
Slashdot (09/02/16)& 10,065&53.0&38&19.5&{\bf Yes}\\
Slashdot (09/02/21)&10,105&53.2&38&19.5& {\bf Yes}\\
Slashdot (09/02/22)&10,605&53.4&38&19.5&{\bf Yes} \\ 
GPlus&68,828&347.1&39&20.0 &{\bf Yes} \\
Epinions (2)&10,363&83.5&40&20.5 &{\bf Yes}\\
Youtube&31,720&	45.1&47&24.0 &{\bf Yes} \\
Pokec &564,001&	53.0&48&24.5&{\bf Yes} \\
Flickr &136,202& 174.5 & 48 &24.5 & {\bf Yes} \\
Livejournal  &945,567 & 57.5 & 52 & 26.5 & {\bf Yes} \\ 
\hline
Patents &2,370& 15.6 & 34 & 17.5 & {\bf Yes} \\
ArXiv (Theo. Cit.)&12,054&43.4&34&17.5& {\bf Yes} \\
ArXiv (Phy. Cit.)&14,785&37.9&35&18.0 & {\bf Yes}\\
\hline
ArXiv (Phy.)&2,860&62.5&32&16.5& {\bf Yes} \\
ArXiv (Astro)&6,536&42.9&33&17.0 & {\bf Yes} \\
DBLP&15,004&17.3&43&22.0& {\bf Yes} \\
ArXiv (Condense) &178,455& 16.0 & 51 & 26.0  & {\bf Yes} \\
\hline
Email (Enron)&3,481&48.2&35&18.0& {\bf Yes}\\
Email (Europe)&1,779&44.0&42&21.5& {\bf Yes}\\
Wiki Talk&21,253 & 83.1 & 49 & 25.0& {\bf Yes}\\
\hline
Stanford &35,600&42.1&42&21.5& {\bf Yes}\\
NotreDame & 16,831&38.7&43&22.0& {\bf Yes}\\
BerkStan&110,202&57.0&45&23.0& {\bf Yes}\\
Google&55,431& 14.8 & 46  & 23.5& {\bf Yes}\\
\hline
Brightkite&4,586&30.8&37&19.0& {\bf Yes}\\
Gowalla&17,946&39.3&41&21.0& {\bf Yes}\\
\hline
Oregon (1) &264&17.1&31&16.0& {\bf Yes} \\ 
Oregon(2) &579&31.0&32&16.5& {\bf Yes} \\ 
CAIDA &575&16.0&34&17.5& {\bf Yes} \\ 
Skitter  &146,601& 50.0 & 48 & 24.5  & {\bf Yes} \\ 
\hline
\hline
Gnutella (02/08/04) &796&5.2&31&16.0&  No \\
Gnutella (02/08/25)&499&2.0&34&17.5& No \\
Gnutella (02/08/24)&709&2.7&34&17.5&  No \\
Gnutella (02/08/30) &1,001&3.8&35&18.0&  No \\
Gnutella (02/08/31)&1,276& 3.6 & 37 & 19.0  &  No \\
\hline
Amazon (03/03/02) &3,727&2.8&42&21.5& No\\
Amazon (2012) &5,318&2.5&43&22.0& No\\
Amazon (03/03/12) &25,717&6.7&43&22.0&No\\
Amazon (03/06/01)  &28,081& 7.3 & 43 & 22.0  & No\\
Amazon (03/05/05) &28,044&7.5&43&22.0& No\\
\hline
Pennsylvania&0&0&47&24.0& No\\
Texas& 0&0&47&24.0&No\\
California&0 & 0 & 49 & 25.0& No\\
\hline
\end{tabular}
}
\end{table}


Given these properties of the embedded watermark, we note that
  watermark node degree and density can be higher than those of many
  real-world graphs, such as those listed in
  Table~\ref{tab:graphs}. Intuitively, to ensure low detectability of such a
  watermark graph, suitable graphs should include a set of nodes ($D$) which are
  difficult to distinguish from the watermark nodes in term of node degree
  and subgraph density. Specifically, a suitable graph dataset needs to contain a set
  of nodes $D$ with degree comparable or higher than the watermark graph node
  degree; and the density of the subgraph on $D$ is at least
  comparable to the watermark graph density. If these two properties hold, the
  embedded watermark graph cannot be easily distinguished from $D$ in the
  graph, and therefore cannot be detected by attackers.

To capture the above intuition, we define that a graph $G$ is
suitable for watermarking if its node degree and graph density satisfy the
following two criteria.  First, the minimum and maximum node degree of $G$, 
denoted as $N_{min}(G)$ and $N_{max}(G)$ respectively,  need to satisfy
$N_{min}(G) \leq (k+1)/2
\leq N_{max}(G)$. Second, across all $k$-node subgraphs of $G$ whose node
degree expectation is greater than $(k+1)/2$, the minimum and
maximum graph density need to satisfy $D_{min}(k) \leq ({k \choose
  2}+k-1)/2 \leq D_{max}(k)$. Together, these two criteria ensure that the
embedded watermark graph can be ``well hidden'' inside $G^{W_i}$. 

To compute $D_{min}(k)$ and
$D_{max}(k)$, we need to enumerate all possible subgraphs of $G$, which is 
computationally prohibitive for large graphs. Thus we apply a
sampling method to estimate them. To estimate $D_{max}(k)$, we
identify the subgraph with the highest density using a greedy search: 
starting from a randomly chosen node $v_1$ with degree $> (k+1)/2$, pick the 2nd node $v_2$ with degree $> (k+1)/2$ that is
connected to $v_1$, then the 3rd node $v_3$ with degree $> 
(k+1)/2$ who has the most number of edges to $v_1$ and $v_2$. This
greedy search stops until we find $k$ nodes.  We repeat the same process for all
the nodes with degree $> (k+1)/2$, creating multiple subgraphs from
which we calculate the density and pick the
highest one.    To estimate $D_{min}(k)$, we apply a similar process to
locate multiple subgraphs except
that for each subgraph we locate the next node $v_{i+1}$ randomly as long as its node degree $>
 (k+1)/2$  and it connects to at least one of the existing nodes
$\{v_1,...v_i\}$.


\para{Suitability of Real Graph Datasets.}  We wanted to understand
  how restrictive our suitability constraints were in the context of real
  graph datasets available today.  We consider 48 real network graphs ranging
  from $10K$ nodes, $39K$ edges to $5M$ nodes and $48M$ edges. These graphs
  represent vastly different types of networks and a wide range of structural
  topologies.  They include 3 social graphs generated from Facebook regional
  networks matching Russia, L.A., and London~\cite{interaction}. They include
  12 other graphs from online social networks, including
  Twitter~\cite{leskovec2012egonetworks},
  Youtube~\cite{yang2012definingcommunities},
  Google+~\cite{leskovec2012egonetworks}, Slovakia
  Pokec~\cite{takac2012dataanalysis}, Flickr~\cite{mislove2007measurement},
  Livejournal~\cite{mislove2007measurement}, 2 snapshots from
  Epinions~\cite{richardson2003trust}, and 4 snapshots from
  Slashdot~\cite{leskovec2009community}.  We also add 3 citation graphs
  from arXiv and U.S. Patents~\cite{leskovec2005graphs}, 4 graphs capturing
  collaborations in arXiv~\cite{leskovec2005graphs} and
  DBLP~\cite{yang2012definingcommunities}, 3 communication graphs generated
  from 2 Email networks~\cite{leskovec2007evolution, leskovec2009community}
  and Wiki Talk~\cite{leskovec2010predicting}, 4 web
  graphs~\cite{leskovec2008statistical, albert1999internet}, 2
  location-based online social graphs from Brightkite and
  Gowalla~\cite{cho2011friendship}, 5 snapshots of P2P file sharing graph
  from Gnutella~\cite{leskovec2007evolution}, 4 Internet Autonomous System (AS)
  maps~\cite{leskovec2005graphs}, 5 snapshots of Amazon co-purchasing
  networks~\cite{leskovec2007marketing, yang2012definingcommunities}, and 3
  U.S. road graphs~\cite{leskovec2008statistical}. The statistics of all 
  graphs are listed in Table~\ref{tab:graphs}.


  For all graphs, we use $\delta = 0.3$ to ensure a 99.999\% watermark
  uniqueness, and compute and list the corresponding value of $k$ (from
  Equation~\ref{eq:dj}) in Table~\ref{tab:graphs}. Next we
  list the two criteria in terms of $(k+1)/2$ vs. $[N_{min}(G), N_{max}(G)]$,
  and $({k \choose 2}+k-1)/2$ vs. $[D_{min}(k), D_{max}(k)]$.  If a graph
  satisfies both criteria, our analytical results will hold for any
  watermarks embedded on it.

  We can make two observations based on results from
  Table~\ref{tab:graphs}. {\em First}, 35 out of our 48 total graphs are
  suitable for watermarking.  Also note that graphs describing similar
  networks are consistent in their suitability.  For example, all 15 graphs
  from various online social networks are suitable for watermarks. {\em
    Second}, all 13 graphs unsuitable for watermarks come from only 3 kinds
  of networks, {\em i.e.} Amazon copurchasing networks, P2P networks, and
  Road networks. These results in each group are self consistent. These
  results support our assertion that our proposed watermarking mechanism is
  applicable to most of today's network graphs with low detection risk. In
  practice, the owner of a graph can apply the same mechanism to determine if
  her graph is suitable for our watermark scheme.

  To understand key properties determining whether a graph is suitable for
  watermarking, we measure various graph structrual properties, including
  average node degree, node degree distribution, clustering coefficient,
  average path length, and assortativity.  We also consider the size and density
  of subgraphs on nodes with degree \fixhan{more} than watermark minimum average
  degree $(k+1)/2$. Our measurement results show that the size and density of
  subgraphs on nodes with degree $> (k+1)/2$ are the most important
  properties to determine suitability. Here, the size of these subgraphs is
  the number of nodes in the subgraph, and the density of the subgraph is
  measured as the average edges each node has inside the subgraph, {\em i.e.}
  average degree inside the subgraph. As shown in Table~\ref{tab:subgraphs},
  unsuitable graphs do not have subgraphs with \fixhan{density to comparable
    to watermarks}, while 
   subgraphs with the desired density can be found in graphs deemed suitable.
  These results are consistent with our intuition on
  quantifying suitability of watermarks.

\para{Summary.} Since the average watermark subgraph has high node
degree and density, a graph suitable for watermarking must include a set
of nodes, whose degree and subgraph density are comparable or even higher
than watermark subgraphs. We propose two criteria targeting at
node degree and subgraph density respectively to
quantify whether a graph is suitable for watermarking. We collect a large set
of available graph datasets today, and find that 35 out of 48 real 
graphs are suitable for watermarking. 
This promising result indicates that
watermark technique can be applied on most of real networks with low
probability to be identified.

\section{More Robust Watermarks}
\label{sec:advanced}
Our basic design provides the fundamental building blocks of graph
watermarking with little consideration of external attacks.  In practice,
however, malicious users can seek to detect or destroy watermarked graphs.
Here, we first describe external attacks on watermarks, and then present
advanced features that defend against the attacks.  Note that these
improvement techniques aim to increase the cost of attacks rather than
disabling them completely. Finally, we re-evaluate
the watermark uniqueness of the advanced design.

\subsection{Attacks on Watermarks}
\label{subsec:attack}
\fixhan{As discussed earlier, our attack model includes attacks trying to
  destroy watermarks while preserving the topology of the original graph.
  Based on the number of attackers, attacks on watermarks fall under our two
  attack models: single attacker and multiple colluding attackers.}  With
access to only one watermarked graph, a single attacker can modify nodes
and/or edges in the graph to destroy watermarks.  With multiple watermarked
graphs, colluding attackers can perform more sophisticated attacks by
cross-comparing these graphs to detect or remove watermarks.




\para{Single Attacker Model.}
The naive edge attack is easiest to launch, and tries to disrupt the
watermark by randomly adding or removing edges on the watermarked graph.
For the attacker, there is a clear tradeoff between the severity of the
attack (number of edges or nodes modified), and the structural change or
distortion applied to the graph structure.

At first glance, this attack seems weak and unlikely to be a real threat.
The probability of the attacker modifying one edge or node in the
embedded watermark graph $W_i$ is extremely low, given the relatively small
size of $W_i$ compared to the graph.  As shown later, however, this
attack can be quite disruptive in practice.  By modifying a node $n_i$ or an
edge connected to $n_i$, the attack impacts all of $n_i$'s neighboring nodes,
since their NSD labels will be modified.  These NSD label changes, while
small, are enough to make locating nodes in the watermark graph very
difficult. This effect is exacerbated in social graphs that exhibit a small
world structure, since any change to a supernode's degree will impact a
disproportionately large portion of nodes in the graph.

One extreme of this attack is to leak patial watermarked graphs or merge
several graphs together. With high probability, it can destroy the embedded
watermarks, but will significantly distort the graph topologies to reduce
their usability.  Thus, we do not consider such scenarios in our study.

\para{Collusion Attacks.} By obtaining multiple watermarked graphs, an
attacker can compare these graphs to eliminate watermarks. Since we anonymize
each watermarked graph by randomly reassigning node IDs (see Section 4.1),
attackers cannot directly match individual nodes across graphs. To compare
multiple graphs, we apply the deanonymization methods proposed
in~\cite{deanonymize,deanonymize11}. Specifically, we first match 1000 highest
degree nodes between two graphs based on their degree and neighborhood
connectivities~\cite{deanonymize11}, and then start from these nodes to
find new mappings with the network structure and the previously mapped
nodes~\cite{deanonymize}. 

Using the deanonymization method, attackers can then build a "cleaned''
graph, where an edge exists if it exists in the majority of
the watermarked graphs.  Since embedded watermark graphs are likely embedded
at different locations on each graph, a majority vote approach effectively
removes the contributions from watermark subgraphs, leading to a graph that
closely approximates the original $G$.

\subsection{Improving Robustness against Attacks}
\label{subsec:defense}

The attacks discussed above can disrupt the watermark extraction process in
two ways. First, adding or deleting nodes/edges in $G'$ changes node degrees,
and therefore nodes' NSD labels, thereby disrupting the identification of
candidate nodes during the second step of the extraction process;  second, 
adding or deleting nodes/edges inside the embedded watermark graph $S^{W_i}$ can
change the structure of the watermark graph, making it difficult to identify
during the third step of the extraction process.  To defend
against these attacks, we must make the watermark extraction process more robust
against attack-induced artifacts on both node and graph structure.  To do so,
we propose four improvements over the basic extraction design in
Section~\ref{subsec:extraction}.


\para{Improvements \#1, \#2: Addressing changes to node neighborhoods.} 
Extracting a watermark involves searching through nodes in $G'$ by their NSD
labels.  By adding or deleting nodes/edges, attackers can effectively change
NSD labels across the graph.  To address this, we propose two changes to the
basic extraction design. {\em First}, we bucketize node degrees (with 
bucket size $B$) to reduce the sensitivity of a node's NSD label to its
neighbors' node degrees. For example, with $B=5$, a node with degree 
$9$ will stay in the same bucket even if one of its edges has been removed (reducing its node degree to $8$).  {\em Second}, when selecting a watermark
node's candidate node list, we replace the exact NSD label matching with the
approximate NSD label matching. That is, a match is found if the overlap
between two bucketized NSD labels exceeds a threshold $\theta$. For example,
with $\theta=50\%$, a node with bucketized NSD label ``1-2-3-4'' would match
a node with label ``1-2-3'' since the overlap is 75\% $>\theta$.

These changes clearly allow us to identify more candidate nodes for each
watermark node, thus improving robustness against small local modifications.
On the other hand, more candidate nodes lead to more computation during the
subgraph matching step, {\em i.e.\/} step 3 in the extraction process.  Such
expansion, however, does not affect watermark uniqueness and detectability, 
since they are unrelated to the size of candidate pools.

\para{Improvement \#3, \#4: Addressing changes to subgraph structure.}  Random
changes made to $G'$ by an attacker has some chance of directly impacting a
node or edge in the embedded watermark.  To address this, we propose two
techniques. {\em First}, we add redundancy to watermarks by embedding the
same watermark graph $W_i$ into $m$ disjoint subgraphs $S_1,S_2,... S_m$
from the original graph $G$.  This greatly increases the probability of the
owner locating at least one unmodified copy of $W_i$ during extraction, even
in the presence of attacks that make significant changes to nodes and edges
in $G'$.  Note that since we embed watermarks on disjoint subgraphs, this
does not affect watermark uniqueness \fixhanx{$1-P(\mathcal{E})$. While embedding $m$ watermarks will impact
  false positive, which is $1-(1-P(\mathcal{E}))^m$}.


{\em Second}, it is still possible that all the watermark graphs are
``destroyed'' by the attacker and there are no matches in the extraction
process. If this happens, we replace the exact subgraph matching in the
step 3 of the extraction process with the approximate subgraph matching. That
is, a subgraph matches the watermark graph if the amount of edge difference
between the two is less than a threshold $L$.  By relaxing the search
criteria used in step 3 of the extraction process, this technique allows us
to identify ``partially'' damaged watermarks, thus again improving robustness
against attacks.  However, it can also increase false positives in watermark
extraction, reducing watermark uniqueness.  We show later in this section
that the impact on watermark uniqueness can be tightly bounded by controlling $L$.


\para{Improvement \#5: Addressing Collusion Attacks.}  Recall that for
powerful attackers able to match graphs at an individual node level, they can
leverage majority votes across multiple watermarked graphs to remove
watermarks.  To defend against this, our insight is to embed
watermarks that have some portion of spatial overlap in the graph, such that
those components will survive majority votes over graphs.

We propose a {\em hierarchical} watermark embedding process to protect
watermark(s) against collusion attacks.  To build watermarked graphs for $M$
users,  we uniform-randomly divide these $M$
users into 2 groups ($a_1$ and $a_2$) and associate each group
with a
public-private
key pair $<K^{a_1}_{pub}, K^{a_1}_{priv}>$ or $<K^{a_2}_{pub},
K^{a_2}_{priv}>$, which is generated and held by the data owner.  We repeat this to produce another
group partition and randomly divide $M$ users into 2 groups ($b_1$ and $b_2$)
associated 
with group key pairs $<K^{b_1}_{pub}, K^{b_1}_{priv}>$ and $<K^{b_2}_{pub},
K^{b_2}_{priv}>$ separately.  After this step, each user is assigned to two groups. For
example, a user $i$ is assigned to groups $a_1$ and $b_2$. 

To prevent the
data owner or users from forging group assignments, we modify step 1 in
Section~\ref{subsec:embedding} to achieve an agreement on group assignments between the data owner and each user. More specifically, at
time $T$ when the data owner tends to share its graph with a user $i$
assigned to two groups, {\em e.g.} groups $a_1$ and $b_2$, the data owner
first send user $i$ three items: current timestamp $T$ and two group signatures $K^{a_1}_{priv}(T)$ and $K^{b_2}_{priv}(T)$. User $i$ then
validates the two group signatures using the two group public keys
$K^{a_1}_{pub}$ and $K^{b_2}_{pub}$. If the timestamps encrypted using group
private keys are $T$, user $i$ agrees the group assignment, saves the three
items, and  sends back its personal signature, {\em i.e.} $K^i_{priv}(T)$; otherwise, user $i$ rejects the
group assignment. Once the data owner receives user $i$'s signature
$K^i_{priv}(T)$, it validates this timestamp with user $i$'s public key. If
it is valid, the data owner generates three seeds for user $i$: $\Omega_i$ by
combining $K^i_{priv}(T)$ and $K^G$, $\Omega_{a_1}$ by combining
$K^{a_1}_{priv}$ and $K^G$, and $\Omega_{b_2}$ by combining
$K^{b_2}_{priv}$ and $K^G$, where $K^G$ is graph key for graph $G$. Through
this agreement scheme, either the data owner or users can not forge their
group assignments. Moreover, since the generated seed for each group
is unique,  we can make sure that only one unique watermark
corresponds to each group. 

To embed the watermarks for user $i$, we first follows step $2-4$ in
Section~\ref{subsec:embedding} to 
embed two {\em group watermarks} using its two group seeds generated through
the above method, {\em i.e.} $\Omega_{a_1}$ and $\Omega_{b_2}$ in the
example. We then use user $i$'s individual seed, {\em i.e.} $\Omega_i$, to embed
an {\em individual watermark}.  When generating the group watermarks, we make
sure that 1) the group watermark remains the same for users in the same
group; and 2) watermarks corresponding to different groups do not
overlap with each other, or with each user's individual watermark graph.
Note that because the group and individual watermarks are generated with
different seeds, this hierarchical embedding process does not affect watermark
uniqueness.

Under this design, a collusion attack can successfully destroy all the
watermarks (group or individual) only if the majority of the watermarked
graphs come from different user groups. Otherwise, the majority vote on raw
edges will preserve the ``group watermark.''  We can compute the success rate
of the attack by the following equation, which represents the probability
that the majority of the graphs obtained by the attacker come from different
user groups:
\begin{equation} \vspace{-0.05in}
\lambda(M_a, J)= \left (1-J \sum^{M_a}_{i=\lceil\frac{M_a+1}{2}\rceil}{M_a\choose i}\cdot
(\frac{1}{J})^i\cdot (\frac{J-1}{J})^{M_a-i}\right)^2  
\label{eq:cap}
\end{equation}
where $M_a$ is the number of watermarked graphs obtained by the attacker and
$J$ is the number of groups in each group partition. The above design chose
$J=2$ because it minimizes $\lambda(M_a,J), \forall M_a$.  Furthermore, when
$M_a$ is odd, $\lambda(M_a,2)=0$; and when $M_a$ is even, $\lambda(M_a,2)$ is
at most 0.25 when $M_a=2$.  Note that in equation (\ref{eq:cap}) the
operation $(.)^2$ is due to the fact that we group the users twice into two
different group classes: $a_1,a_2$ and $b_1,b_2$.  If we only perform the
group partition once ({\em e.g.\/} dividing the users into $a_1, a_2$), then
$\lambda(2,2)=0.5$. This means that in practice we can further reduce
$\lambda$ by performing multiple rounds of group division (2 in the above
design) and adding more group watermarks.

Note that group watermarks contain much less information than single user
watermarks.  In fact, the more robust a group watermark, the larger
granularity (and less precision) it will provide.  Our proposed solution is
to extend the system by using additional ``dimensions,'' {\em e.g.} go beyond
the two dimensions of $a$ and $b$ mentioned above.  Combining results from
multiple dimensions will quickly narrow down the set of potential users
responsible for the leak. However, since a colluding attack requires the
involvement of multiple leakers, even identifying a single leaker is
insufficient.  Developing a scheme to reliably detect multiple
(ideally all) colluding users is a topic for future work.

\subsection{Impact on Watermark Uniqueness}
\label{subsec:advanced-proof}
To improve the robustness of our watermark system,  we relax the subgraph
matching criteria from exact matching to approximate matching with at most
$L$ edge difference. Such relaxation does not affect watermark detectability
because it does not change the embedding
process. However, it may affect watermark uniqueness, which we will
analyze next.



Consider two watermarked graphs $G^{W_i}$ and $G^{W_j}$ that were independently generated
for user $i$ and $j$ following the three steps defined in
Theorem~\ref{the1}.  Let $S^{W_i}$ and $S^{W_j}$ represent  the embedded watermark graph in
$G^{W_i}$ and $G^{W_j}$, respectively.  To examine the watermark uniqueness,
we seek to compute the probability that a subgraph in
$G^{W_j}$ differs from $S^{W_i}$ by at most $L$ edges. 

Our analysis follows a similar structure of Theorem~\ref{the1}'s proof. Let
$\mathcal{E}_Y$ denote the event where a subgraph of $G^{W_j}$ built on $k$
nodes $Y=\{y_1,y_2,..., y_k\}$ only differs from $S^{W_i}$ by at most $L$
edges.  Our goal is to calculate the probability of the event
$\mathcal{E}=\cup_Y \mathcal{E}_Y$, which is the union on all combinations of
$k$ nodes.  To do so, we first compute the probability of individual $\mathcal{E}_Y$. 


As shown in Theorem~\ref{the1}, the edges between ${k \choose 2}-(k-1)$ node pairs in $S^{W_i}$
are generated randomly with
probability $\frac{1}{2}$ and are independent of $G^{W_j}$, while the rest
$k-1$ edges ($<x_l,x_{l+1}>, l=1...k-1)$ are fixed.   Thus we can show that the probability that a subgraph $G^{W_j}[Y]$ differs from
$S^{W_i}$ by $h$ edges is upper bounded by $\frac{1}{2}^{e-k+1} \cdot {e \choose h}
$ where $e={k \choose 2}$.  Therefore, we can derive the
probability of $\mathcal{E}_Y$ as $P(\mathcal{E}_Y) \leq \frac{1}{2}^{e-k+1} \cdot \sum^{L}_{h=0}{e \choose
  h}$. 
And consequently, we have 
\begin{equation}
P(\mathcal{E}) \leq n^k \cdot \frac{1}{2}^{e-k+1} \cdot \sum^{L}_{h=0}{e \choose
  h}
\label{eq:unique2}
\end{equation}
where $e={k \choose 2}$, $k= (2+\delta)log_2{n}$, and $n$ is the node count
of $G^{W_j}$.



Next, given the probability of uniqueness $1-P(\mathcal{E})$, we compute the
upper bound on $L$ to ensure $1-P(\mathcal{E}) \geq 0.99999$ for all the
graphs in Table~\ref{tab:graphs} except \fixhan{Road graphs, Amazon graphs
  and P2P network graphs}.  Again we set $\delta=0.3$. The result is listed
in Table~\ref{tab:bound}, where the maximum limit of $L$ varies
\fixhan{between 0 and 12}.  In general, the larger the graph, the higher the
upper bound on $L$.

\begin{table}[t]
\caption{Upper bound of $L$ for the 35 network graphs.}
\label{tab:bound}
\resizebox{1.\columnwidth}{!}{
\begin{tabular}{|c| c c c c c|}
\hline
\multirow{2}{*}{Graph} & \multirow{2}{*}{Oregon (1)} & \multirow{2}{*}{Oregon
  (2)} & \multirow{2}{*}{CAIDA} & Email  & arXiv  \\
& & & & (Enron) & (Theo. Cit.)\\
\hline
$L$ Bound & 0 & 1 & 1 & 1 & 1 \\
\hline 
\hline
\multirow{2}{*}{Graph} & arXiv & arXiv &
arXiv& \multirow{2}{*}{Patent} & Slashdot\\
& (Phy. Cit.)  & (Phy.) & (Astro) & & (08/11/06)\\
\hline
$L$ Bound  & 1 & 1 & 1 &2 & 3 \\
\hline 
\hline 
\multirow{2}{*}{Graph} &\multirow{2}{*}{Twitter} & Slashdot & Slashdot
&Slashdot & \multirow{2}{*}{Brightkite}  \\
 & & (09/02/16) & (09/02/21) & (09/02/22) & \\
\hline
$L$ Bound & 3 & 3 & 3 & 3 & 3 \\
\hline 
\hline
Graph & Russia & Epinions (1) & Google+ & Epinions (2) & Standford\\
$L$ Bound & 4 & 4 & 4 & 5 & 5 \\
\hline 
\hline
\multirow{2}{*}{Graph}& Email & \multirow{2}{*}{Gowalla} & \multirow{2}{*}{BerkStand} & \multirow{2}{*}{DBLP} & 
\multirow{2}{*}{NorteDame}\\
&  (Europe) &  & & &\\
\hline
$L$ Bound & 5 & 5 & 6 & 7 & 7 \\
\hline 
\hline
Graph & L.A. & London & Flickr & Wiki & Google\\
$L$ Bound & 8 & 8 & 8 & 8 & 8 \\
\hline 
\hline
\multirow{2}{*}{Graph} & \multirow{2}{*}{Skitter} & \multirow{2}{*}{Youtube}
& \multirow{2}{*}{Pokec} & arXiv & \multirow{2}{*}{Livejournal}\\
& & & & (Condense) & \\
\hline
$L$ Bound & 8 & 9 & 9 & 11 & 12 \\
\hline
\end{tabular}
}
\end{table}

\section{Experimental Evaluation}
\label{sec:eval}
We evaluate the proposed graph watermarking system using
real network graphs.  We consider three key performance metrics, {\em false positive},
 {\em graph distortion} and {\em watermark
  robustness}.  Having analytically quantified the watermark
uniqueness in Section~\ref{sec:analysis} and~\ref{sec:advanced}, we focus on
examining  graph distortion and watermark robustness while ensuring 
\fixhanx{false positive less than 0.001\%}. 
We also study the computational
efficiency of the proposed watermark embedding and extraction schemes. 

\para{Experiment Setup.}  Given the large number of graph computations per
data point, we focus our experiments on two of the larger network graphs
listed in Table~\ref{tab:graphs}, the LA regional Facebook graph and the
Flickr network graph.  The two graphs have very different sizes and graph
structures. 
\fixhanx{To guarantee less than $0.001\%$ false positive}, we select
$\delta=0.3$, and the $k$ values for the LA and Flickr graphs are 45 
and 48, respectively.  For our basic design, we generate 1 watermark per
graph. For our advanced design, we set $L$ to 8, the degree bucket
size to 10, and the NSD similarity threshold to $\theta=0.75$. For each user,
we embed 5 watermarks in its graph, 3 as 
individual watermarks and 2 as group watermarks.  We chose these settings
because they are intuitive and work well in practice. We leave the optimization of these
parameters to future work. 

In the following, we present our experiment results in terms of 1) amount of
distortion introduced to the original graph due to watermarking, 2)
robustness of the watermark against attacks, and 3) computational efficiency
of our watermarking design.



\begin{table}[t]
\centering
\caption{Percentage of modified nodes and edges after embedding $5$
  watermarks into a graph and impact on graph structure (dK-2 Deviation).} 
\label{tab:wtsize}
\small
\begin{tabular}{|c| c| c |c |}
\hline
Graph & Nodes (\%) & Edges (\%) & dK-2 Deviation \\
\hline
Watermarked LA & 0.037\% & 0.033\%  & 0.0008\\
\hline
Watermarked Flickr &0.014\% & 0.019\% & 0.0001 \\
\hline
\end{tabular}
\end{table}


\begin{table}[t]
\centering
\caption{Graph metrics are consistent w/ and w/o watermarks.}
\label{tab:metrics}
\resizebox{1.\columnwidth}{!}{
\begin{tabular}{|c| c| c| c| c| c |}
\hline
Graph & AS & Avg. CC & Avg. Deg & Avg. Path & Dia. \\
\hline
LA Original& 0.21 & 0.19 & 25.4 & 4.6 & 14 \\
Watermarked & 0.21 & 0.19 & 25.4 & 4.6 & 14\\
\hline
Flickr Original & -0.02   & 0.18& 18.1 & 5.3&  21 \\
Watermarked & -0.02 & 0.18 &  18.1&5.3 & 21\\
\hline
\end{tabular}}
\end{table}

\subsection{Graph Distortion from Watermarks}
\label{subsec:distortion-result}
We consider three types of metrics for measuring the graph distortion from watermarks. 
\begin{packed_itemize} 
\vspace{-0.05in}
\item {\em Modifications to the raw graph} -- We count the number of nodes and
  edges modified by embedding watermarks. Intuitively, more
  modifications to the graph introduce higher distortion. 
\item {\em Deviation in the dK-2 distribution} -- We also measure the Euclidean
  distance between the dK-2 series of the original graph and that of the
  watermarked graphs\footnote{\small The Euclidean distance between two dK-2 series
    $G1$ and $G2$
    is defined by $\frac{1}{D}\sqrt{\sum_{<d_1,d_2>} (e^{G1}_{<d1,d2>} -
    e^{G2}_{<d1,d2>} )^2}$ where $D$ is the number of $<d_1,d_2>$
    combinations or entries in the dK-2 series.}.  Larger
  deviation in dK-2 series implies higher distortion to the graph structure. 
\item {\em Graph metrics w/ and w/o watermarks} -- Finally we measure the
  commonly used graph metrics before and after the watermarking, including
  node degree distribution, assortativity (AS)~\cite{modeling_www}, clustering
  coefficient (CC)~\cite{modeling_www}, average path length and diameter. Any
  large deviation in any of these metrics indicates that the watermarked
  graph experienced large distortion. 
\end{packed_itemize} 
\vspace{-0.05in}

We have examined the distortion introduced by both the basic and advanced
designs. We only show the results of the advanced design because it adds more
watermarks and thus leads to higher distortion. For both LA and Flickr graphs, we generate 10 different watermarked graphs
(using 10 different random generator seeds) and present the average result across
these graphs.  Because computing average path
length and diameter on these two large graphs is highly computational
intensive, we randomly sample 1000 nodes and compute the average
path length and diameter among them (following the same approach taken by
prior works on social graph analysis~\cite{interaction}). 


\begin{figure*}[t]
\begin{minipage}{0.5\textwidth}
\subfigure[Robustness,  LA basic]
{\epsfig{file=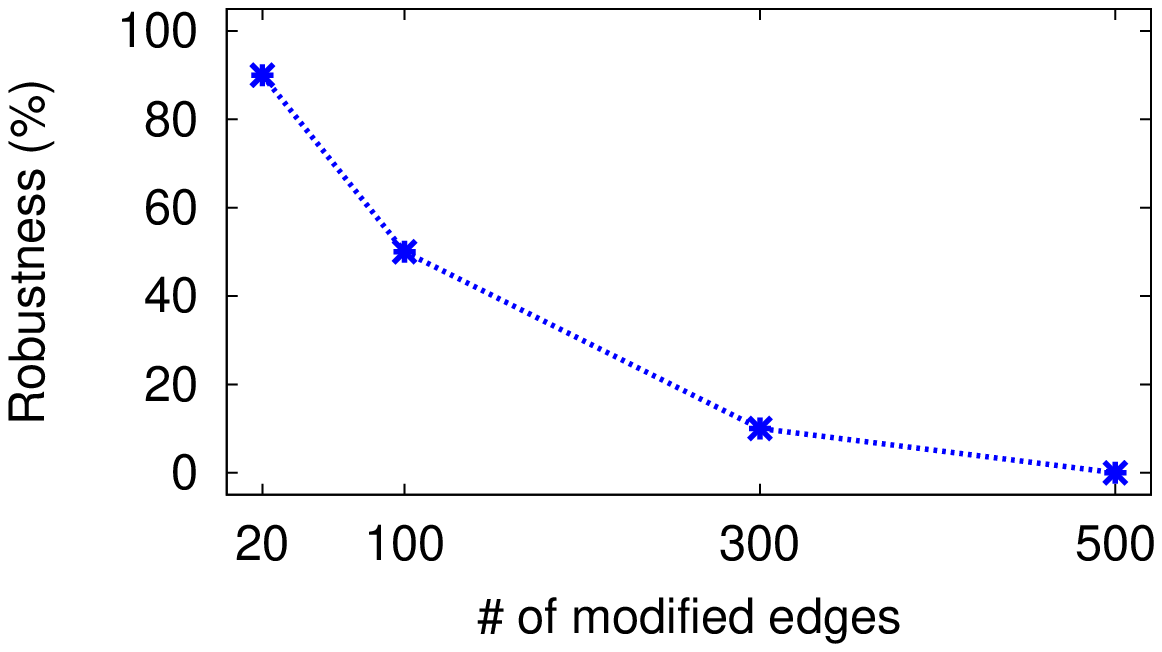,width=1.65in}}
\label{fig:larbas}
\hspace{-0.1in}
\subfigure[Robustness, Flickr basic]
{\epsfig{file=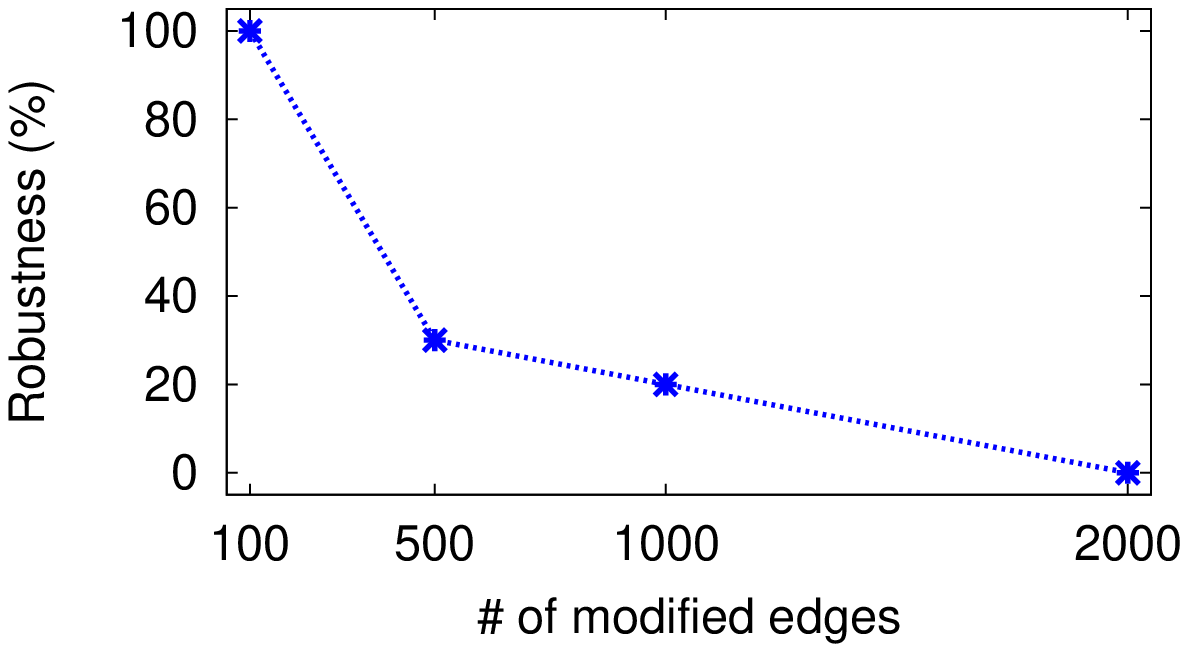,width=1.65in}}
\label{fig:flirbas}
\caption{The robustness of the basic design against the single attacker model.}
\label{fig:basicrob}
\end{minipage}
\hspace{0.1in}
\begin{minipage}{0.5\textwidth}
\subfigure[Distortion,  LA basic]
{\epsfig{file=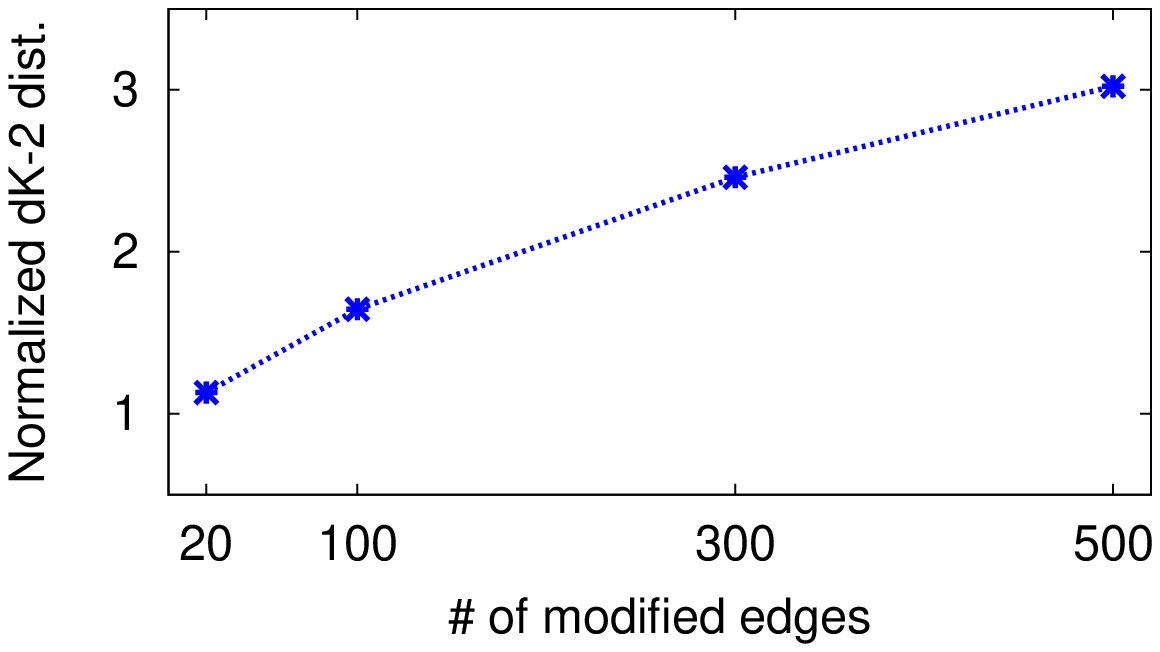,width=1.6in}}
\label{fig:ladkbas}
\subfigure[Distortion,  Flickr basic]
{\epsfig{file=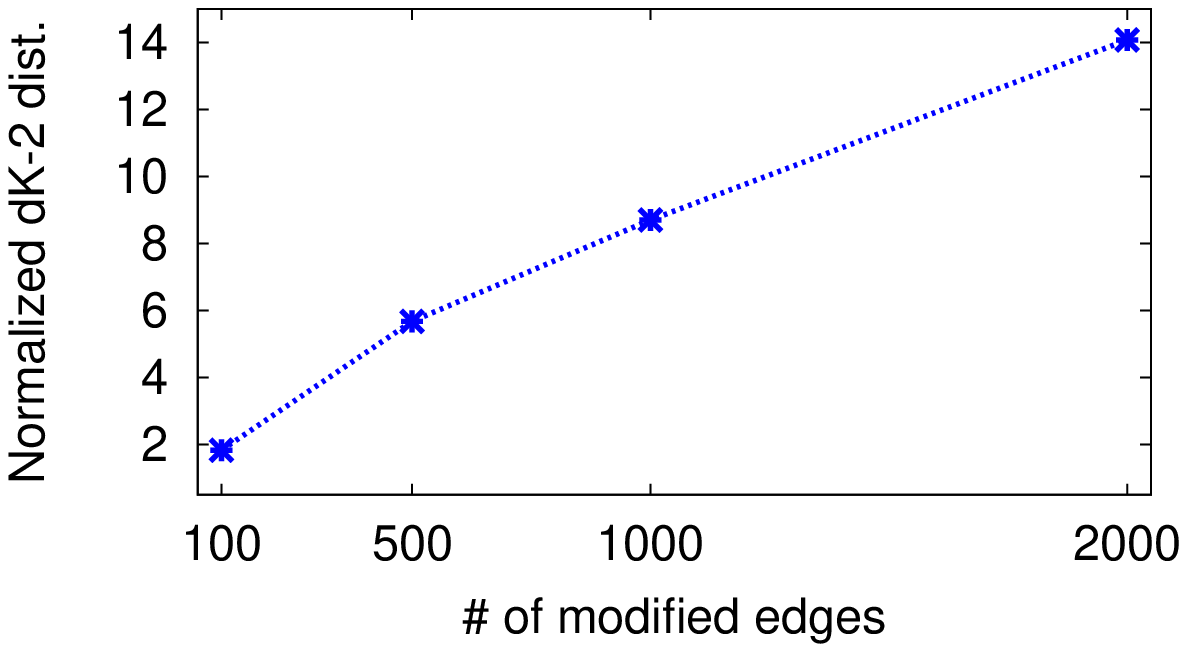,width=1.6in}}
\label{fig:flidkbas}
\caption{The distortion caused by the single attacker model in the basic design.}
\label{fig:basicdist}
\end{minipage}
\begin{minipage}{0.5\textwidth}
\subfigure[Robustness,  LA improve]
{\epsfig{file=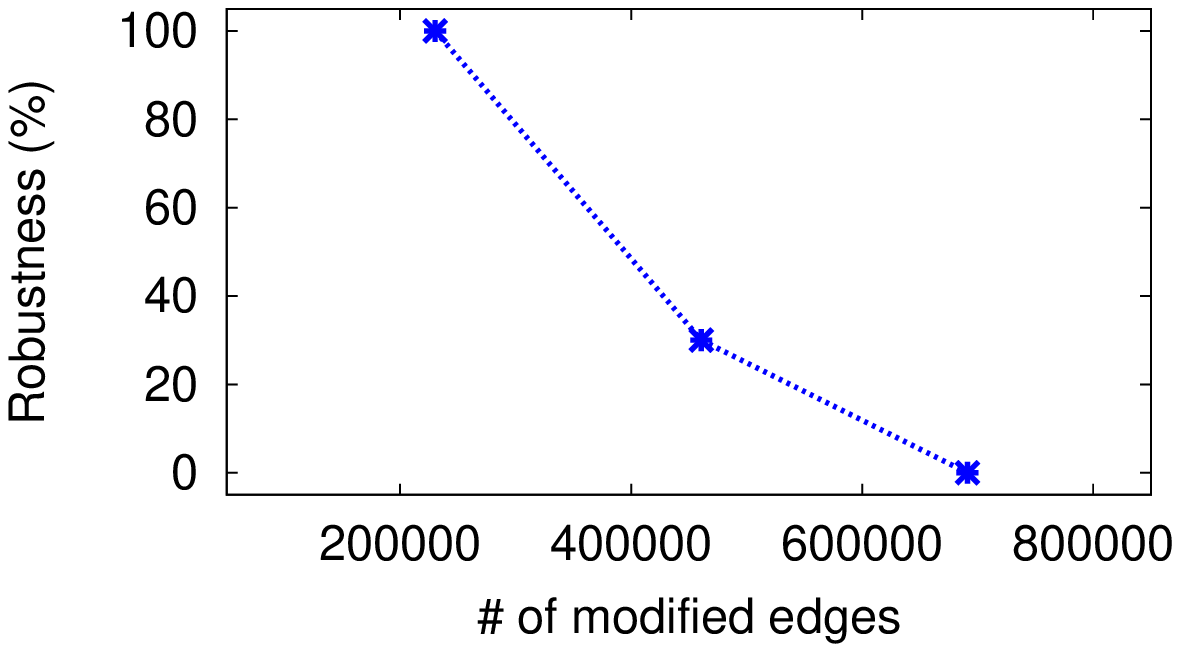,width=1.65in}}
\label{fig:larna}
\hspace{-0.1in}
\subfigure[Robustness, Flickr improve]
{\epsfig{file=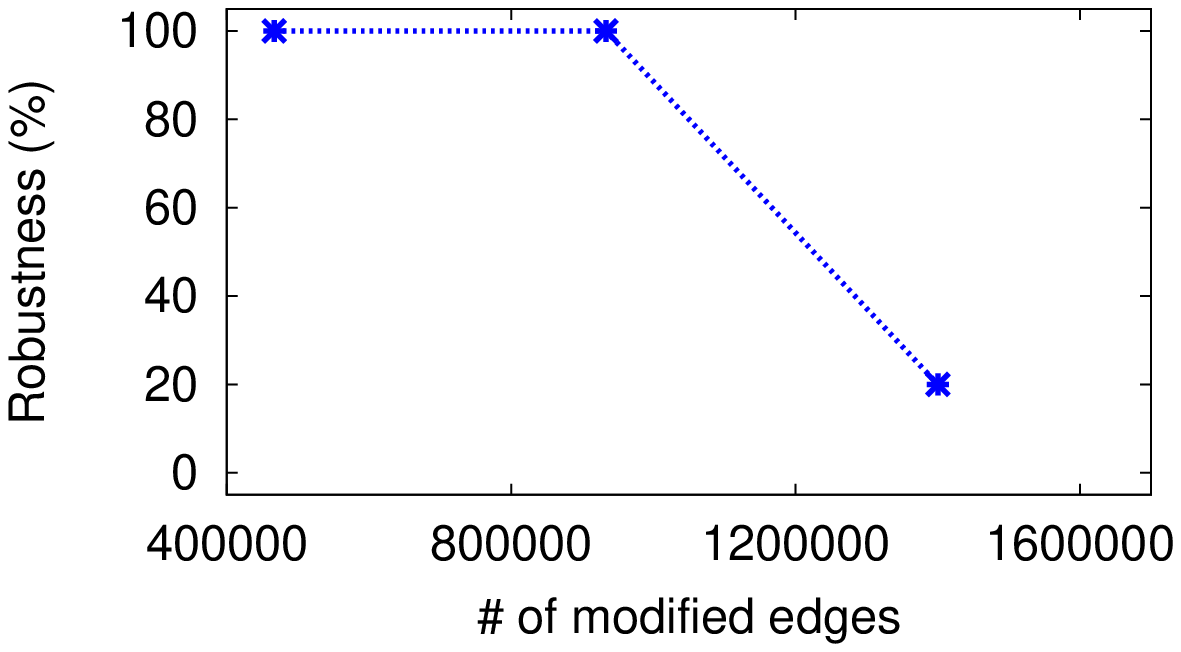,width=1.65in}}
\label{fig:flirna}
\caption{The robustness in the improved design against the single attacker model.}
\label{fig:narob}
\end{minipage}
\hspace{0.1in}
\begin{minipage}{0.5\textwidth}
\subfigure[Distortion,  LA improve]
{\epsfig{file=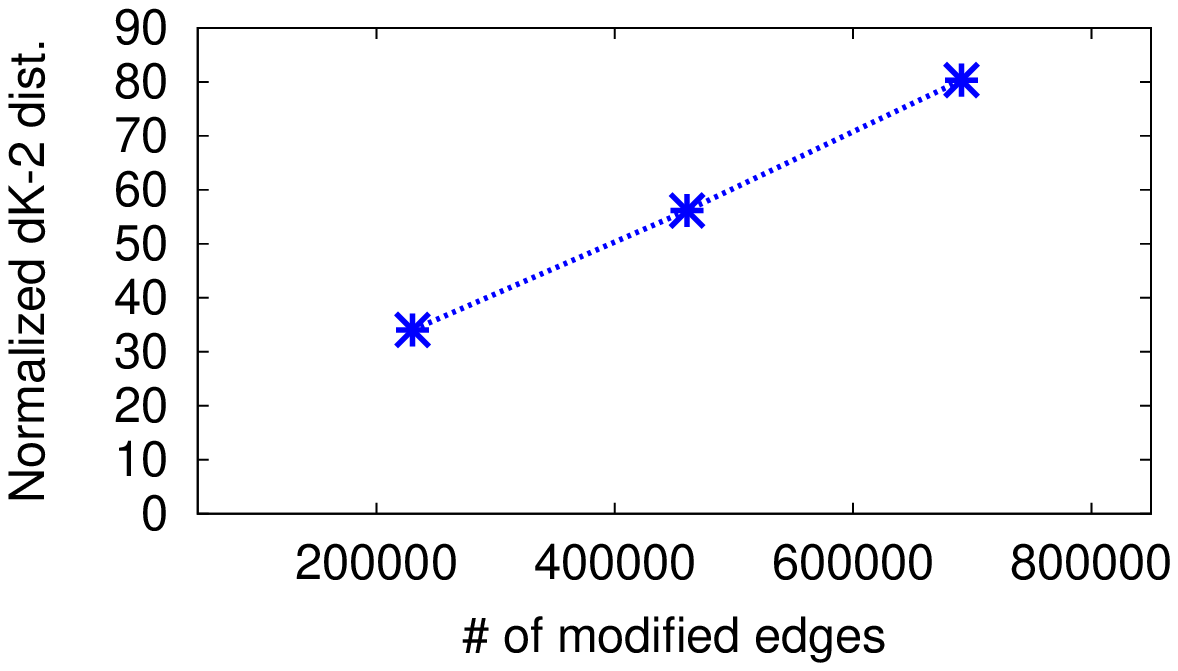,width=1.65in}}
\label{fig:ladkna}
\subfigure[Distortion,  Flickr improve]
{\epsfig{file=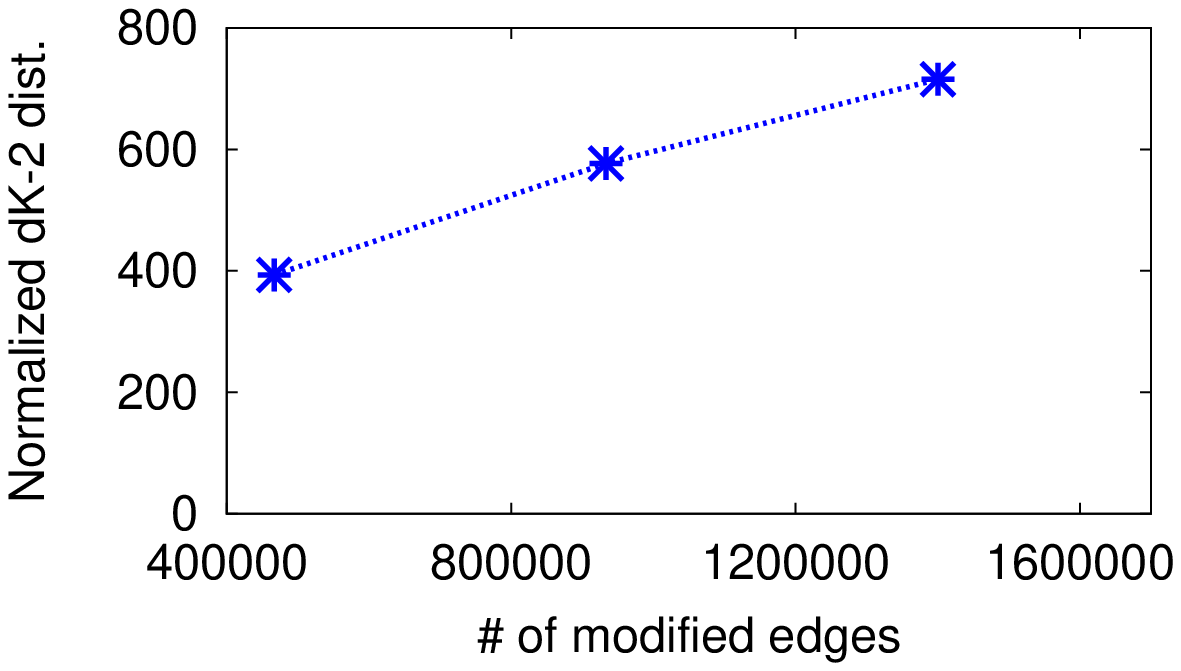,width=1.65in}}
\label{fig:flidkna}
\caption{The distortion caused by the single attacker model in the improved design.}
\label{fig:nadist}
\end{minipage}
\end{figure*}

 Table~\ref{tab:wtsize} shows
the percentage of modified nodes and edges by watermarking. Even after
embedding 5 watermarks, the modification is less than 0.04\% for LA and
0.02\% for Flickr.  These small changes imply little distortion on the
watermarked graphs. This is further confirmed by the average dK-2 distances
  for both graphs, 0.0008 for LA and 0.0001 for Flickr, indicating that the watermarked graphs are highly similar to the original
  graph. 

Table~\ref{tab:metrics} compares the original and watermarked graphs in terms
of  five representative graph metrics.  Similarly, for both LA and Flickr,
the graph metrics remain the same before and after watermarking.
We also examined the statistical
distribution of each metric and found no visible difference between the
graphs. 


Together, these results indicate that our proposed watermarking system
successfully embeds watermarks into graphs with negligible impact on graph
structure. This is unsurprising, given the extremely small size of watermarks
relative to the original graphs.  Thus we believe watermarked graphs can replace
the originals in graph applications and produce (near-)identical results.

\subsection{Robustness against Attacks}
\label{subsec:attackresult}
Next, we investigate how the proposed watermarking system performs in the
presence of attacks. For each of the two attack implementations discussed in
Section~\ref{subsec:attack}, we vary the attack strength and examine the
robustness against the attack as well as the cost of the attack.
Specifically, we repeat each experiment for 10 times, and examine two metrics:

\begin{packed_itemize}\vspace{-0.05in}
\item{\em Robustness} -- in the single attacker model, the robustness is
  quantified as the ratio of graphs from which we can successfully extract at least
one of the 3 individual watermarks. In the collusion attack, in addition
  to this ratio, we also measure the ratio of graphs where we can extract at least 
  one of the 5 watermarks (3 individual + 2 group watermarks). 
\item {\em Cost of the attack} -- the normalized distortion
produced on the attacked graph.  It represents the Euclidean distance between
the dK-2 series of the attacked graphs and that of the original graph, normalized by the Euclidean
distance between the dK-2 series of the clean watermarked graphs and that of
the original graph. 
 If the normalized distortion is larger than 1,  the attack introduces more distortion than embedding the
watermarks. \vspace{-0.05in}
\end{packed_itemize}

\begin{figure*}[t]
\begin{minipage}{0.5\textwidth}
\subfigure[Robustness, LA]
{\epsfig{file=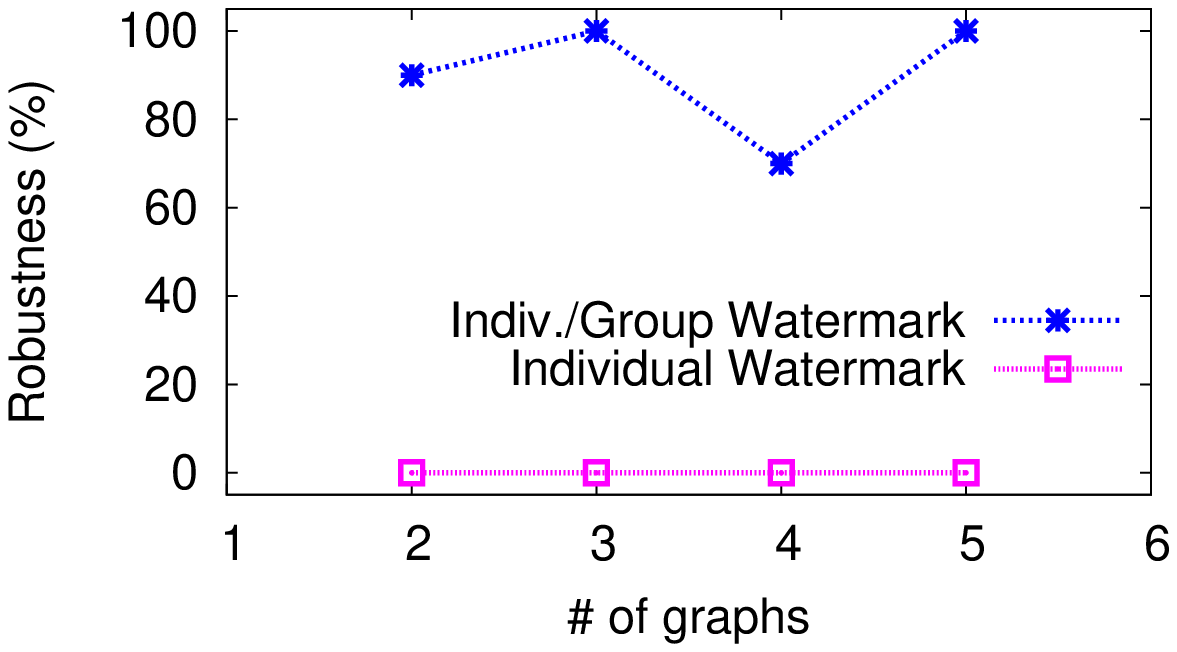,width=1.65in}}
\subfigure[Robustness, Flickr]
{\epsfig{file=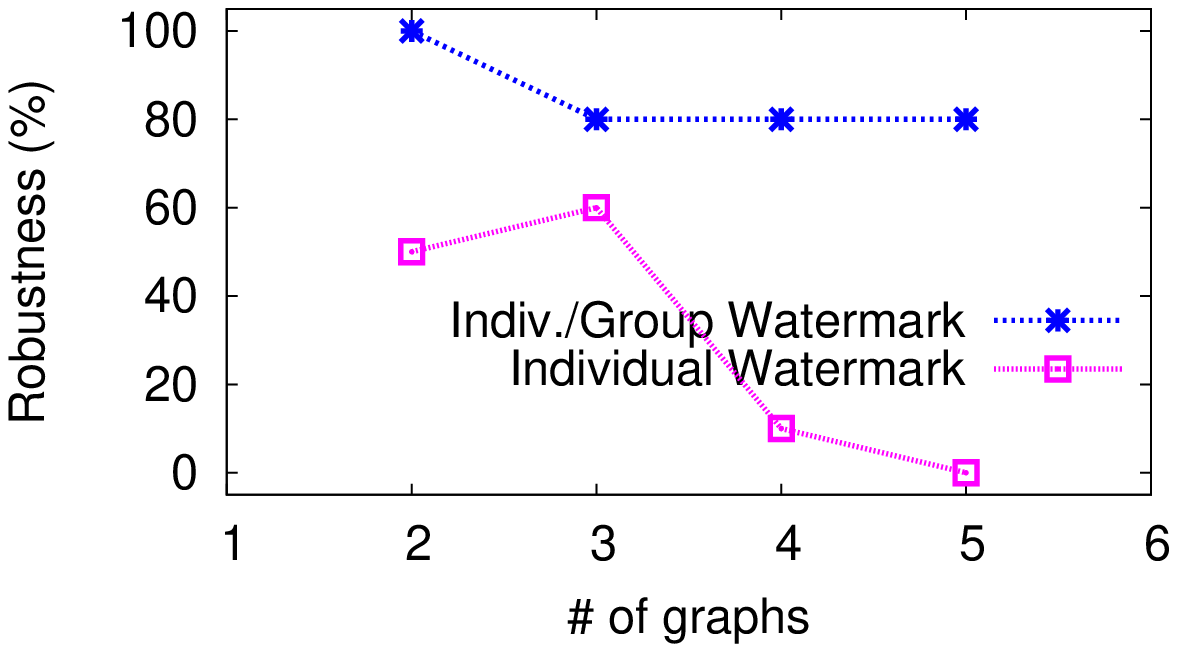,width=1.65in}}
\caption{The robustness against the collusion attack.}
\label{fig:collusionrob}
\end{minipage}
\begin{minipage}{0.5\textwidth}
\subfigure[Distortion, LA]
{\epsfig{file=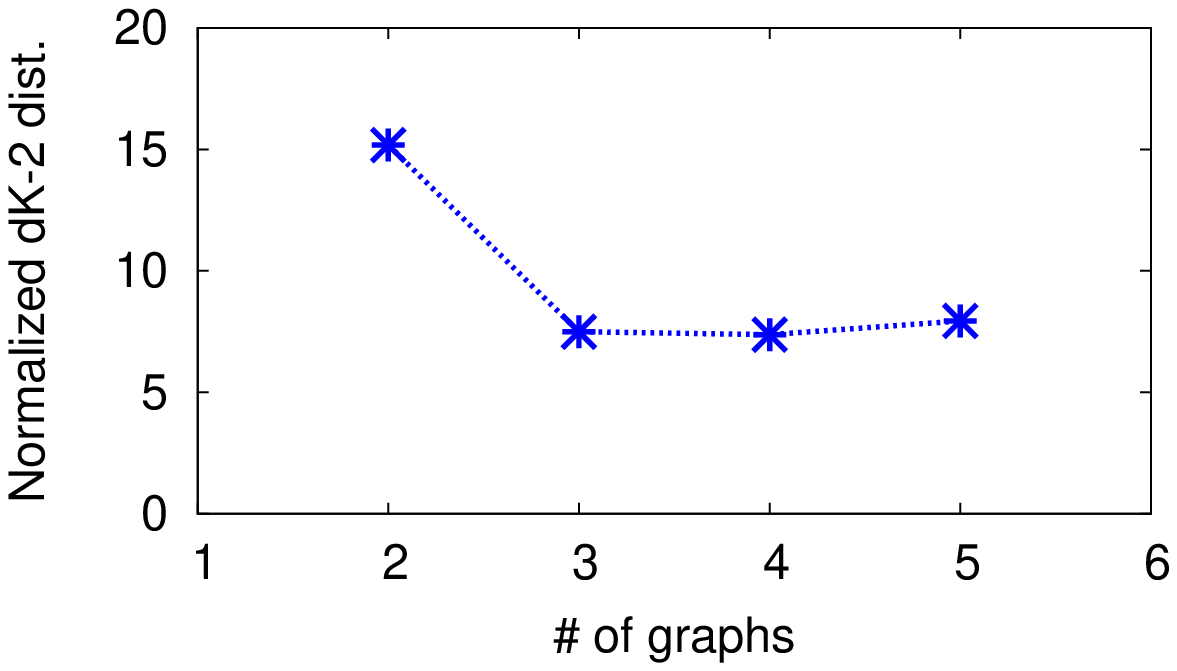,width=1.65in}}
\subfigure[Distortion, Flickr]
{\epsfig{file=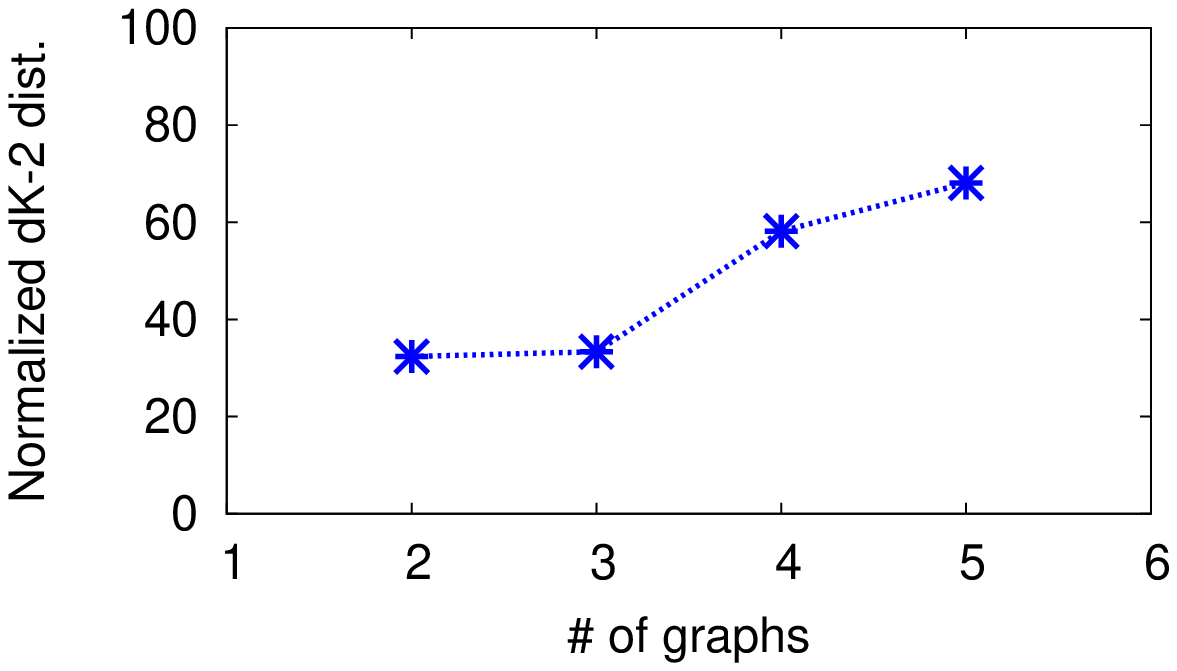,width=1.65in}}
\caption{The distortion caused by the collusion attack.} 
\label{fig:collusiondist}
\end{minipage}
\end{figure*}

\para{Results on the Single Attacker Model.}  For the single attacker model, we quantify
the attack strength by the number of modified edges. The robustness and the
cost of the attack are measured as a function of the number of modified edges.


To show how robustness is improved using the improvement mechanisms,
  we first evaluate the robustness results in the basic watermark
  scheme.  We run the single attacker model on the watermarked graphs by varying the number of modified
edge number, and repeat the experiment 10 times. The robustness here is
  quantified as the ratio of graphs from which we can successfully extract
  the watermark. 

Figure~\ref{fig:basicrob} shows the robustness of the basic
  watermark scheme against the single attacker model. It shows that randomly
  modifying a small number of edges
  disrupted the watermark subgraph extraction process. In LA, our basic design cannot recover the watermark with $100\%$ probability even when we modify $20$
  edges. In Flickr, a large graph, the robustness of the basic scheme reduces
  to less than $40\%$ when only $500$ edges are modified. In each case, at least one of the nodes in the
watermark subgraph had a modified NSD label (one of its neighbors' node
degree changed), and it could not be located in the extraction process. We
also look at the distortion caused by the attack shown in
Figure~\ref{fig:basicdist}. As expected, the small number of modified edges
causes small distortions in graph structures. For example, in LA, when the
robustness is $0$, the
distortion is around 3x more than that caused by embedding the
watermark. Both results show that watermarked graphs generated by the basic
scheme are easily disrupted by even small, single user attacks. 


Figure~\ref{fig:narob}(a)-(b) plot the robustness of watermarked LA and
Flickr graphs generated by the scheme with the improvement
  mechanisms against the single attacker model.  As expected,  the robustness decreases with the attack strength since more edges are modified to
``destroy'' watermarks. For LA, our system maintains 100\% robustness
up to 230K modified edges, which is around $400$x stronger than the
maximum attack strength handled in the graph generated by the basic design.
For Flickr, the system can
handle attack strength up to 933K modified edges, which is $>400$x stronger
 than the maximum attack strength in the basic design.
 This is because Flickr
is larger in size while having a similar watermark graph size $k$,  so the
attacker must modify more edges to destroy watermarks.   On the other hand,
results in Figure~\ref{fig:nadist} show that the cost of these
attacks is large. For Flickr, with more than 1.4M modified edges, an attack leads to 800x more
distortion over that caused by embedding the watermarks.  Together, these
results show that our watermark system with the improvement mechanisms is highly robust against
single user attacks. 

\para{Results on Collusion Attacks. }  To implement the collusion attack
desbribed in Section~\ref{subsec:attack}, we first generate 10 watermarked graphs
and randomly pick $M_a$ graphs from them as the graphs acquired by the
attacker. We vary the number of graphs obtained by the attacker $M_a$ between 2 to
5. For each $M_a$ value we repeat the experiments 10 times and report the
average value. Since watermarks generated by the basic design
  can be easily detected and removed by the powerful collusion attack,
here we focus on evaluating the robustness of the improvement mechanisms.

Figure~\ref{fig:collusionrob}(a)-(b) shows the robustness of the watermarked
LA and Flickr graphs against the collusion
attack. Figure~\ref{fig:collusionrob}(a) shows that in LA, by applying
majority votes on raw edges, the collusion attack can effectively remove all
3 individual watermarks. However, the attack is ineffective in removing both
group watermarks such that we can extract at least one group watermark in
more than 60\% of the attacked graphs. Here the
robustness values, deviate slightly from that projected by Equation
(\ref{eq:cap}) because we limit the number of
statistical sampling to 10 runs. Unlike LA, Figure~\ref{fig:collusionrob}(b)
plots that the collusion attack cannot remove all the individual watermarks
in Flickr when using 2 or 3 watermarked graphs. This is all because the
deanonymization method causes a large portion of nodes mismatched in Flickr (~30\%
nodes). Finally, Figure~\ref{fig:collusiondist} shows that the collusion
attack also introduce larger distortions in graph structure. This mainly
comes from the mismatch of the deanonymization methods. 

These results show that even a powerful collusion attack is ineffective in removing all
the embedded watermarks. Moreover, the potential inaccuracy of the deanonymization
method causes the collusion attack even weaker in removing individual
watermarks.  Of course, the attackers will eventually succeed in
disrupting watermarks if they are willing to modify larger portions of the
graph, thus sacrificing the utility of the graph.  While our work provides a
robust defense against attackers with relatively low level of tolerance for
graph distortion, we hope follow-on work will develop more robust defenses
against higher distortion attacks.

\subsection{Computational Efficiency}
\label{subsec:eff}
Here, we measure the efficiency of the watermarking system. There are two
components in the watermarking system, {\em i.e.} watermark embedding and
watermark extraction. The time to extract a watermark is the time to run step
2 and 3 in watermark extraction, {\em i.e.} candidate selection and watermark
identification.

To accelerate the extraction process, we parallelize the key steps across
multiple servers. More specifically, in the candidate selection step, any
available servers are assigned an unchecked watermark node to find its candidates. In
step 3, each available server will be assigned to search one watermark from
one candidate of watermark node $x_1$. When a watermark is found or no more
candidates are unchecked, the extraction process stops (for that user).

We perform measurements to quantify the actual impact of parallelizing
extraction over a cluster. All system parameters are the same as previous
tests, except that we embed $1$ watermark into a graph. \fixhan{To extract
watermarks, we compare the improved watermark extraction method to the
basic extraction method, with bucket size 10 and 
 NSD similarity of 0.75 in the improved extraction method.} In addition to the
two graphs, {\em i.e.} Flickr and LA, we also measure efficiency on the
largest graph in our study  (Livejournal, 5.2 million nodes, 49 million
edges), shown in Table~\ref{tab:graphs}.  We parallelize watermark
extraction across $10$ servers, each with 2.33GHz Xeon servers with 192GB
RAM. All experiments repeat on $10$
different watermarked graphs, and the time is the average of the $10$
computation time.

First result in Table~\ref{tab:wttime} is that watermarking system is
efficient in embedding and extracting watermarks. On average, embedding one
watermark into a graph is very fast. For example, average embedding
time for the largest graph, Livejournal, is around 12 minutes and embedding times
for Flickr and L.A. are less than 2 minutes. Even using one server to extract
watermarks, the computation time is small. Like in Flickr, the
extraction time is around 13 minutes \fixhan{using both the basic method and
  the improved method.}
From our observation, the time to identify the watermark graph on the
candidate subgraphs is much less than the time required to find and filter
candidates, which corresponds to around 99\% of total computation
time. Since finding candidates takes $O(kn)$ computational complexity
  and $k=(2+\delta)\log_{2}n$, the complexity to extract a watermark from a
  real-world graph is $O(n\log_2{n})$. Here $k$ is the number of nodes in the
  watermark graph and $n$ is the number of nodes in the total graph. 

Second, we find that speedup from distributed extraction is quite good,
with speedup of $8$ over 10 servers for Livejournal and $7$ for LA
  (for both extraction methods). The speedup for Flickr is only around $4$
  using both methods, because one of the watermarked graphs takes much longer
  time than others in finding candidates, $\sim 10$ minutes. This is almost 4
  times longer than the average extraction time on the other graphs. Not
  counting this outlier, average parallel extraction time on Flickr is around
  $150$
  seconds for both methods, which is $5$ times faster than using one
  server. This is because the core computation is finding candidates, and
  completion time can vary when computing the similarity of NSD between
  watermark nodes and graphs nodes, which depends on node degree. The higher
  the degree is, the longer it takes for the similarity computation. Since
  there are several Flickr watermark nodes of high degree, time to find
  candidates is relatively longer.

Finally, comparing the two extraction methods, there is no significant
difference between their computation time. This is because the extraction
time of both methods are dominated by the time to find and filter
candidates, which is $O(n\log_2{n})$ for both methods.

\fixhan{\para{Summary.} We evaluate the efficiency of the graph watermark
  embedding and extraction algorithms on three real-world graphs with
  $600K\sim5M$ nodes and $7M\sim48M$ edges. The results show that the
  embedding process is fast even for large graphs, and only takes up to 12
  minutes to embed a watermark into a graph with $5M$ nodes.  In the
  extraction process, the time to identify watermark graphs on the set of
  pre-filtered candidate nodes is much less than the time to filter candidate
  nodes, whose complexity is $O(n\log_2{n})$.  Our experimental results also
  show that on a single commodity server, the extraction time is at most
  $43$ minutes in a $5M$-node graph, and can be future reduced to less than
  $5$ minutes by distributing the computation across multiple servers.  }

\begin{table}[t]
\centering
\caption{The efficiency of the watermarking system, including watermark
  embedding time on one server, the extraction time on one server and the
  parallel extraction time across $10$ servers using basic watermark
  extraction method and improved watermark extraction method.}
\label{tab:wttime}
\resizebox{1.\columnwidth}{!}{
\begin{tabular}{|c| c | c c | c c|}
\hline
\multirow{2}{*} {Graph} & \multirow{2}{*}{Embedding (s)} &\multicolumn{2}{|c|}
{Basic Extraction} & \multicolumn{2}{|c|} {Improved Extraction}\\
\cline{3-6}
{} & {} & Single(s) &Parallel (s) & Single (s) & Parallel (s)\\
\hline
LA & 40 & 270 & 39 & 310 & 42 \\
\hline
Flickr & 80 & 767 & 195 & 776 & 197\\
\hline
Livejournal & 695 & 2568 & 310 &  2605 & 317\\
\hline
\end{tabular}}
\end{table}

\section{Conclusion}
In this paper, we take a first step towards the design and implementation of
a robust graph watermarking system.  Graph watermarks have the potential to
significantly impact the way graphs are shared and tracked.  Our work
identifies the critical requirements of such a system, and provides an initial
design that targets the critical properties of uniqueness, robustness to
attacks, and minimal distortion to the graph structure.  We also identify key
attacks against graph watermarks, and evaluate them against an improved
design with additional features for improved robustness under attack.

Our evaluation shows that our initial watermarking system modifies very few
nodes and edges in a graph, {\em i.e.} less than 0.04\% nodes and edges in a
graph with 603K nodes and 7.6M edges. Results also demonstrate extremely low
distortion, {\em i.e.} the watermarked graphs are highly consistent with the
original graph in all graph metrics we considered.  Empirical tests on
several real, large graphs show that our robustness features dramatically
improved our resilience against both single and multi-user collusion attacks.
Finally, we show that the embedding process and the extraction process are
efficient, and the extraction process is easily parallelized over a computing
cluster.

While our proposed scheme achieves many of our initial goals, there is
significant room for improvement and ongoing work.  One focus is developing
stronger redundancy schemes to protect against attackers with a greater
tolerance for graph distortion, {\em i.e.} willing to make a greater number
of node/edge changes.  Another is to develop alternate schemes that can
recover more information about multiple attackers in the colluding attack
model.



\balance
\begin{small}
\bibliographystyle{acm}
\bibliography{zhao,social,han,p2p}
\end{small}

\end{document}